\newcommand\version{February 28, 2015}
\newtheorem{theorem}{Theorem}[section]
\newtheorem{proposition}[theorem]{Proposition}
\newtheorem{lemma}[theorem]{Lemma}
\newtheorem{corollary}[theorem]{Corollary}
\theoremstyle{definition}
\theoremstyle{remark}
\newtheorem{remark}[theorem]{Remark}
\numberwithin{equation}{section}
\renewcommand{\epsilon}{\varepsilon}
\newcommand{\N}{\mathbb{N}}
\renewcommand{\phi}{\varphi}
\newcommand{\R}{\mathbb{R}}
\DeclareMathOperator{\diam}{diam}
\DeclareMathOperator{\Div}{div}
\DeclareMathOperator{\per}{Per}
\begin{document}

\title[Existence of minimizers --- \version]{A compactness lemma and its application to the existence of minimizers for the liquid drop model}

\author{Rupert L. Frank}
\address{Rupert L. Frank, Mathematics 253-37, Caltech, Pasadena, CA 91125, USA}
\email{rlfrank@caltech.edu}

\author{Elliott H. Lieb}
\address{Elliott H. Lieb, Departments of Mathematics and Physics, Princeton
University, Princeton, NJ 08544, USA}
\email{lieb@princeton.edu}

\begin{abstract}
The
ancient Gamow liquid drop model of nuclear energies has had a renewed life
as an interesting problem in the calculus of
variations: Find a set $\Omega \subset \R^3$ with given volume A that
minimizes the sum of its surface area and its Coulomb self energy. A ball
minimizes the former and maximizes the latter, but the conjecture is that a 
ball is always a minimizer -- when there is a minimizer. Even the existence
of minimizers for this interesting geometric problem has not been shown
in general. We 
prove the existence of the absolute minimizer (over all $A$) of the energy
divided by $A$ (the binding energy per particle). A second result of our
work is a general method for 
showing the existence of optimal sets in geometric minimization problems,
which we call the `method of the missing mass'. A third point is the
extension of the pulling back compactness lemma \cite{LiIntersection}
from $W^{1,p}$ to $BV$. 
\end{abstract}


\maketitle

\renewcommand{\thefootnote}{${}$} \footnotetext{\copyright\, 2015 by
  the authors. This paper may be reproduced, in its entirety, for
  non-commercial purposes.\\
  Work partially supported by U.S. National Science
Foundation PHY-1347399 and DMS-1363432 (R.L.F.) and PHY-1265118. (E.H.L.)}


\section{Introduction and main results}

In this paper we are interested in some aspects of Gamow's famous 1930
`liquid drop model'  \cite{Ga}. 
This simple, but successful model of
atomic nuclei predicts, among other things, 
\begin{enumerate} 
\item[(a)]
the spherical shape of nuclei,
\item[(b)] the non-existence of nuclei with atomic number larger than some critical
value, and
\item[(c)] the existence of a nucleus with minimal binding energy per particle.
\end{enumerate}
Rather surprisingly, this model has received little attention in
mathematics, and only recently have proofs appeared which rigorously derive
properties (a) (partially) and (b). One of our goals in this paper will
be to prove property (c). Before we describe these results in detail,
we recall the model.

In the liquid drop model a nucleus $\Omega\subset\R^3$ is assumed to
have constant density, which we may assume to be equal to one. Thus,
$|\Omega|=A$ is the number of nucleons (protons and neutrons) in the
nucleus. Mathematically, it is not necessary to assume that this number
is an integer. The binding energy of a nucleus is given by $$ \mathcal
E(\Omega) := \per\Omega + D(\Omega) \,, $$ where $\per\Omega$ denotes
the surface area of $\Omega$, provided its boundary is smooth (see
\eqref{eq:perimeter} for a definition of $\per\Omega$ for an arbitrary
measurable set), and $$ D(\Omega) := \frac{1}{2} \iint_{\Omega\times\Omega}
\frac{dx\,dy}{|x-y|} $$ denotes the Coulomb repulsion energy. We dropped a
volume term $-eA$, where $e$ is the energy of `nuclear matter'
per unit volume in the `thermodynamic limit'. (This energy is computed using only the short-range strong forces, without the Coulomb repulsion.) This term 
only contributes a constant in our case, and need not be exhibited
explicitly for our considerations, but stability of a real nucleus
requires its inclusion; the total energy must be negative. The surface
term $\per\Omega$ accounts for corrections to the volume term coming
from the surface of the nucleus and acts as a surface tension. The
Coulomb term $D(\Omega)$ describes the repulsion between the protons
in the nucleus. The charge (which we set equal to one in the definition
of $D(\Omega)$) is an effective charge proportional to the ratio of the
proton number to $A$. We also neglected asymmetry and pairing terms.

The minimal binding energy of a nucleus with mass number $A>0$ is given by
$$
E(A) := \inf\{ \mathcal E(\Omega):\ |\Omega|=A \} \,.
$$
The following is rigorously known about the properties (a) and (b) mentioned before. 
\begin{enumerate}
\item[(a)] There is an $A_{c_1}>0$ such that $E(A)$ is minimized by a ball
for all $0<A\leq A_{c_1}$ and such that a ball does not minimize
$E(A)$ for $A>A_{c_1}$. It is even known that balls are the \emph{unique}
minimizers for $A<A_{c_1}$. 
\item[(b)] There is an $A_{c_2}>0$ such that $E(A)$ has no minimizer for
$A>A_{c_2}$.
\end{enumerate}
Fact (a) was proved by Kn\"upfer and Muratov \cite[Thm. 3.2]{KnMu} (balls are minimizers for
$A\leq A_{c_1}$; see also \cite{Ju}) and by Bonacini and Cristoferi \cite[Thm. 2.10]{BoCr}
(balls are unique minimizers for $A<A_{c_1}$ and are not minimizers for
$A>A_{c_1}$). Fact (b) is proved by Lu and Otto \cite[Thm. 2]{LuOt} and their 2014
presentation of this work at the Fields Institute was partial motivation
for our interest in the problem.

Fact (b) has the physical interpretation that nuclei `fission' spontaneously when they are too large.  When the mass is slightly below the critical value $A_{c_2}$ they are still unstable against being impacted by a neutron -- with the release of some notable amount of energy. In this model the reduction in energy is electrostatic (Coulomb), not nuclear, so one might properly refer to a `Coulomb bomb' rather than to a `nuclear bomb'.

It is \emph{conjectured} \cite{ChPe}, but not known, that $A_{c_1}=A_{c_2}$.

In this paper we show that (see Theorem \ref{iron}):
\begin{enumerate}
\item[(c)] There is a set $\Omega_*\subset\R^3$ such that
$$
\frac{E(A)}{A} \geq \frac{\mathcal E(\Omega_*)}{|\Omega_*|}
$$
for any $A>0$.
\end{enumerate}
Physically, the set $\Omega_*$ corresponds to an iron (Fe)
nucleus which is the element with the greatest binding energy per particle. Our proof of (c) makes use of the Lu--Otto analysis in \cite{LuOt}. From
(c) one can easily deduce that (see Corollary \ref{tdlimit})
$$
\lim_{A\to\infty} \frac{E(A)}{A} = \frac{\mathcal E(\Omega_*)}{|\Omega_*|} \,.
$$

Recall from (b) that
$$
A_{c_2} := \sup\left\{ A>0:\ E(A) \ \text{has a minimizer} \right\} <\infty \,.
$$
We consider the question whether there is a minimizer for $E(A_{c_2})$. By
definition there is a sequence of sets $\Omega_n$ with $A_n:=|\Omega_n|\to
A_{c_2}$ which minimize $E(A_n)$. The question is how these sets behave as
$n\to\infty$. If they have a limit, this limit is a good candidate for a
minimizer for $E(A_{c_2})$. On the other hand, the fact that there is no
minimizer beyond $A_{c_2}$ means, in some sense, that the Coulomb energy
becomes the dominant term at $A_{c_2}$. In order to minimize the Coulomb
energy, one might think that the sets $\Omega_n$ become more and more
elongated, which could cause the failure of a non-trivial
limit. That this possible scenario does \emph{not} occur is our next result
(see Theorem \ref{lead}).
\begin{enumerate}
\item[(d)] $E(A_{c_2})$ has a minimizer.
\end{enumerate}
Physically, a corresponding minimizing set $\Omega_{c_2}$ corresponds to
lead (Pb), a stable nucleus with the largest possible nucleon number. Our proof of
(d) uses crucially bounds from \cite{LuOt} for the proof of (b).

We also prove the following (physically obvious) fact (see Theorem \ref{binding}).
\begin{enumerate}
\item[(e)] If for a given $A>0$ one has the strict binding inequality
$E(A)<E(A')+E(A-A')$ for all $0<A'<A$, then there is a minimizer for $E(A)$.
\end{enumerate}
We emphasize that the strict binding inequality is only a sufficient condition for
the existence of a minimizer. For instance in the situation of (d) it is probably
violated.

\medskip

A second goal in this paper, besides the study of the liquid drop model, is
to present a method to prove the existence of optimal sets in geometric
minimization problems that involve the perimeter functional. We believe that
our method is simpler and more direct than methods used, for instance, in
\cite{dPVe} and has applications in related problems.

The method that we are using may be called the {\it `method of
the missing mass'}  and originates in the solution of the HLS problem
\cite{LiHLS} and also used in the paper \cite[Lemma 1.2 and Appendix]{BrNi}.
In this method each
element of a minimizing sequence is decomposed into two terms, namely, a
main piece with good convergence properties and a missing piece which
vanishes in a weak sense. The key point is then to use this missing piece as
a potential minimizer. Typically, this allows one to conclude that the limit
of the main piece is a minimizer. This method is often, but not in this
paper, combined with a strengthening of Fatou's lemma which includes the
missing piece; see \cite{LiHLS} and, for an extension to a wide class of
convex functions, see \cite{BrLi1}. (The latter paper also contains a review
of how this method is used in \cite{LiHLS} and \cite{BrNi}.)

The method of the missing mass works both in problems with and without
trans\-lation invariance. The minimization problems in \cite{BrNi} is not
translation invariant and the translation invariance in \cite{LiHLS} is
broken by symmetric decreasing rearrangement. In situations \emph{with}
translation invariance, however, as in the present paper, there is the
additional difficulty of detecting a main piece which has a non-zero
limit. This problem was overcome in \cite{LiIntersection}; see
\cite{BrLi2,FrLiLo} for early applications of this pulling back compactness lemma to concrete minimization problems. More recent applications include
\cite{FrLi,BeFrVi}.

As far as we know, the method of the missing mass has not been applied to geometric minimization problems in which the minimization is over sets. In this paper we identify a possible non-zero limit after translations in much the same way as in \cite{LiIntersection}, where the case of the Sobolev space $W^{1,p}(\R^d)$ with $p>1$ was treated. Our case here corresponds to $p=1$, where the space $BV(\R^d)$ of functions of bounded variation has to be used to get better compactness properties; see Proposition \ref{runningafter}. What is considerably more involved than in the $p=2$ case is to split the elements of the minimizing sequence into a main and a missing piece. This is accomplished in Lemma \ref{dichotomy} and the method used has some similarities with the one used in \cite{FrLiSeSi} in a different physical context.


\section{Compactness up to translations}\label{sec:exabstr}

In this section we describe sequences of subsets of $\R^d$ with a uniform perimeter bound and we prove a pulling back compactness theorem, which extends that of \cite[Sect. 3]{LiIntersection} to functions of bounded variation.

If $E_n$ and $E$ are measurable sets in $\R^d$, we say that $E_n \to E$ (globally)
if $|E_n\Delta E|\to 0$ and we say that $E_n\to E$ locally if for every compact set
$K\subset\R^d$,
$$
\left|\left(E_n\Delta E\right) \cap K\right| \to 0 \,.
$$
If $E\subset\R^d$ is a measurable set, we denote by
\begin{equation}
\label{eq:perimeter}
\per E = \sup\left\{ \int_E \Div F\,dx:\ F\in C^1_0(\R^d,\R^d)\,,\ |F|\leq 1 \right\}
\end{equation}
its perimeter in the sense of De Giorgi. Moreover, $B_r(a)$ denotes the open ball of
radius $r$ centered at $a\in\R^d$, and we abbreviate $B_r=B_r(0)$.

The following proposition is our technical main result. As explained in the introduction, its proof relies on a
technique from \cite{LiIntersection}.

\begin{proposition}\label{runningafter}
Let $(E_n)$ be a sequence of measurable sets in $\R^d$ with uniformly bounded
perimeter. Then one of the following two alternatives occurs:
\begin{enumerate}
\item $\lim_{n\to\infty} |E_n| =0$
\item There is a set $E$ with positive measure and a sequence $(a_k)$ in $\R^d$ such
that for a subsequence $(n_k)$ one has
$$
E_{n_k}-a_k \to E
\qquad\text{locally}.
$$
Moreover, $|E| \leq \liminf_{k\to\infty} |E_{n_k}|$ and $\per E \leq
\liminf_{k\to\infty} \per E_{n_k}$.
\end{enumerate}
\end{proposition}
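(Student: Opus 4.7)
The approach is a concentration-compactness dichotomy in the spirit of the pulling-back compactness argument of \cite{LiIntersection}, with $BV$ taking the place of $W^{1,p}$. If $|E_n|\to 0$ we are in alternative~(1), so we may pass to a subsequence with $|E_n|\geq c>0$. The isoperimetric inequality $|E_n|^{(d-1)/d}\leq C_d\,\per E_n$, together with the assumed perimeter bound, shows that in fact the volumes are uniformly bounded above as well, so $\chi_{E_n}$ is a bounded sequence in $BV(\R^d)$. The task is now to produce the set $E$ and the translations $a_n$ of alternative~(2).

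The heart of the argument is the following non-vanishing claim, which I expect to be the main obstacle: there exist $R,\delta>0$ and $a_n\in\R^d$ with $|E_n\cap B_R(a_n)|\geq\delta$ along a subsequence. I would establish it by contradiction. Suppose $\eta_n:=\sup_{a\in\R^d}|E_n\cap B_1(a)|\to 0$, so that $\eta_n<|B_1|/2$ for $n$ large. Cover $\R^d$ by a family $\{B_1(k)\}$ of unit balls with bounded overlap, and apply the relative isoperimetric inequality in each one: since $|E_n\cap B_1(k)|\leq\eta_n<|B_1|/2$,
\begin{equation*}
|E_n\cap B_1(k)|^{(d-1)/d}\leq C\,\per(E_n,B_1(k)).
\end{equation*}
Multiplying by $|E_n\cap B_1(k)|^{1/d}\leq\eta_n^{1/d}$ and summing over $k$,
\begin{equation*}
|E_n|\leq\sum_k|E_n\cap B_1(k)|\leq C\,\eta_n^{1/d}\sum_k\per(E_n,B_1(k))\leq C'M\,\eta_n^{1/d}\longrightarrow 0,
\end{equation*}
which contradicts $|E_n|\geq c$. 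This is where the $BV$ hypothesis is essential: a mere $L^1$ bound would allow ``vanishing without mass loss,'' but the perimeter bound rules this out by forcing the mass in a unit ball to cost perimeter.

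Once non-vanishing is in hand, I translate: set $\tilde E_n:=E_n-a_n$, so that $\per\tilde E_n$ remains bounded and $|\tilde E_n\cap B_R|\geq\delta$. On each fixed ball $B_{R'}$, the sequence $\chi_{\tilde E_n}$ is bounded in $BV(B_{R'})$, and the compact embedding of $BV(B_{R'})$ in $L^1(B_{R'})$ delivers a subsequence converging in $L^1(B_{R'})$ and, after a further extraction, almost everywhere. A standard diagonal procedure over $R'=1,2,\dots$ produces a single subsequence with $\chi_{\tilde E_{n_k}}\to\chi_E$ in $L^1_\loc(\R^d)$ for some measurable $E\subset\R^d$ (the almost-everywhere limit of $\{0,1\}$-valued functions is $\{0,1\}$-valued). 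Passing to the limit in $|\tilde E_{n_k}\cap B_R|\geq\delta$ gives $|E|\geq\delta>0$. Finally, Fatou's lemma delivers $|E|\leq\liminf|E_{n_k}|$, and the classical lower semicontinuity of the De~Giorgi perimeter under local $L^1$-convergence of characteristic functions gives $\per E\leq\liminf\per E_{n_k}$, completing the argument.
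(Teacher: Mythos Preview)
Your proof is correct and follows the same overall architecture as the paper (non-vanishing $\Rightarrow$ translate $\Rightarrow$ local $BV$-compactness $\Rightarrow$ lower semicontinuity), but your proof of the non-vanishing step is genuinely different from the paper's. The paper establishes, for every $u\in BV(\R^d)$ and $r>0$, the quantitative inequality
\[
\int_{\R^d}\bigl(|\nabla u|+Cr^{-1}|u|\bigr)\,dx \ \geq\ c\Bigl(\sup_{a}\,|B_r(a)\cap\{u\neq 0\}|\Bigr)^{-1/d}\int_{\R^d}|u|\,dx,
\]
by multiplying $u$ by a translated bump $\chi_{a,r}$, applying the Sobolev inequality to $\chi_{a,r}u$, and then integrating over $a\in\R^d$; this is the direct $BV$ analogue of the original Lieb lemma in \cite{LiIntersection}. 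You instead run a lattice covering argument with the relative isoperimetric inequality on each unit ball and sum using bounded overlap of the cover and of the perimeter measure. Your route is more geometric and tailored to characteristic functions, and is closer in flavor to Lions-type vanishing arguments; the paper's route yields a statement valid for arbitrary $u\in BV(\R^d)$ (not just indicators), which is the point they emphasize as the $p=1$ extension of \cite{LiIntersection}. Both lead to the same conclusion $\sup_a|E_n\cap B_1(a)|\gtrsim\bigl(|E_n|/(\per E_n+|E_n|)\bigr)^d$ needed here.
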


\begin{proof}
We begin by proving that for every $u\in W^{1,1}(\R^d)$ and every $r>0$,
\begin{equation}
\label{eq:runningafterproof}
\int_{\R^d} \left( |\nabla u| + C r^{-1} |u| \right)dx \geq c \left( \sup_{a\in\R^d}
\left|B_r(a)\cap \{ u \neq 0\} \right| \right)^{-1/d} \int_{\R^d} |u|\,dx
\end{equation}
for some constants $c$ and $C$ depending only on $d$. In fact, let $\chi$ be a
smooth function with support in $B_1$ and $\|\chi\|_1=1$. We put $\chi_{a,r}(x) =
r^{-d} \chi((x-a)/r)$ and compute
$$
\int_{\R^d} |\nabla(\chi_{a,r}u)| \,dx \leq \int_{\R^d} \left(|\chi_{a,r}| |\nabla
u| + |\nabla\chi_{a,r}| |u|\right) dx \,. 
$$
On the other hand, by a Sobolev inquality and H\"older inequality,
\begin{align*}
\int_{\R^d} |\nabla(\chi_{a,r}u)| \,dx & \geq c \left( \int_{\R^d} |\chi_{a,r}
u|^{d/(d-1)} \,dx \right)^{(d-1)/d} \\
& \geq c \left( \sup_{a\in\R^d} \left|B_r(a)\cap \{ u \neq 0\} \right|
\right)^{-1/d} \int_{\R^d} |\chi_{a,r} u|\,dx \,.
\end{align*}
When we integrate both the upper and the lower bound on $\|\nabla(\chi_{a,r}
u)\|_1$ with respect to $a$ we obtain \eqref{eq:runningafterproof} with $C= \|\nabla\chi\|_1$.

By a standard argument \cite[Thm. 3.9]{AmFuPa} \eqref{eq:runningafterproof} extends
to $u\in BV(\R^d)$ and therefore, in particular, to $u=\chi_{E_n}$, where $(E_n)$
has uniformly bounded perimeter. For $r=1$, we find that
$$
\sup_{a\in\R^d} | B_1(a)\cap E_n| \geq \left( \frac{c |E_n|}{\per E_n + C |E_n|}
\right)^d \,.
$$
If we assume that alternative (1) does not occur, we may pass to a subsequence and
assume that $\inf_n |E_n| >0$. Thus the right side of the previous inequality is
bounded away from zero and so there are $a_n\in\R^d$ such that
$$
\inf_n |B_1 \cap (E_n-a_n)| = \inf_n |B_1(a_n)\cap E_n| >0 \,.
$$
We now apply the compactness result for sets of finite perimeter \cite[Thm.
3.39]{AmFuPa} to $E_n - a_n$. We infer that there is a set $E$ of finite perimeter
such that $E_n- a_n\to E$ locally. It is easy to see that local convergence implies
that $\liminf |E_n| = \liminf |E_n - a_n| \geq |E|$. Finally, lower semi-continuity
of the perimeter, that is, $\liminf \per E_n = \per (E_n-a_n) \geq \per E$, follows
easily from the variational definition of the perimeter.
\end{proof}

Next, we describe sequences of sets which converge locally but not globally, because
they lose measure in the limit. We show that such sets can be decomposed into a main
piece, which does converge globally, and a remainder piece which disappears from any
compact set.

\begin{lemma}\label{dichotomy}
Let $(E_n)$ be a sequence of measurable sets in $\R^d$ with finite perimeter such
that $E_n\to E$ locally for some set $E\subset\R^d$. Assume that
$0<|E|<\liminf_{n\to\infty} |E_n|$. Then there is a sequence $(r_n)$ in $(0,\infty)$
such that the sets
$$
F_n := E_n \cap B_{r_n} \,,
\qquad
G_n := E_n \cap \left( \R^d \setminus \overline{B_{r_n}} \right)
$$
satisfy
\begin{equation}
\label{eq:perdown}
\lim_{n\to\infty} \left( \per E_n - \per F_n - \per G_n \right) = 0
\end{equation}
and
\begin{equation}
\label{eq:limit}
F_n \to E
\ \text{(globally)}\qquad \text{and}\qquad
G_n \to \emptyset
\ \text{locally}.
\end{equation}
In particular,
\begin{equation}
\label{eq:limitmassper}
\lim_{n\to\infty}|F_n| = |E|
\qquad\text{and}\qquad
\liminf_{n\to\infty} \per F_n \geq \per E \,.
\end{equation}
\end{lemma}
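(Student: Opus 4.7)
The plan is to choose radii $r_n\to\infty$ for which slicing $E_n$ along $\partial B_{r_n}$ produces a negligible amount of extra perimeter, while the interior ball $B_{r_n}$ captures essentially all the mass of $E$. The mechanism behind the vanishing perimeter cost is the standard identity
\[
\per(F\cap B_r) + \per(F\setminus\overline{B_r}) = \per F + 2\, \mathcal{H}^{d-1}(F\cap\partial B_r),
\]
valid for every set $F$ of finite perimeter and almost every $r>0$. Applied at $r=r_n$, this reduces \eqref{eq:perdown} to showing that $\mathcal{H}^{d-1}(E_n\cap\partial B_{r_n})\to 0$.

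The core of the argument, and the main obstacle, is the simultaneous selection of $r_n$ meeting two competing requirements: $r_n$ must be large enough that $B_{r_n}$ contains essentially all of the ``bulk'' part of $E_n$ (namely, the part close to $E$), yet must lie in a range where the spherical trace of $E_n$ is small on average. I would first use the local convergence $E_n\to E$ to produce, via a diagonal argument, a sequence $R_n\to\infty$ with $|(E_n\Delta E)\cap B_{R_n}|\to 0$. In the relevant regime where the perimeters are uniformly bounded, the isoperimetric inequality yields a uniform volume bound $|E_n|\leq M$, and the coarea formula applied to $x\mapsto |x|$ gives
\[
\int_{R_n/2}^{R_n} \mathcal{H}^{d-1}(E_n\cap\partial B_r)\,dr = |E_n\cap (B_{R_n}\setminus B_{R_n/2})| \leq M.
\]
Hence some $r_n\in[R_n/2,R_n]$ can be selected at which $\mathcal{H}^{d-1}(E_n\cap\partial B_{r_n})\leq 2M/R_n\to 0$ and at which the perimeter identity above holds exactly; note that then $r_n\to\infty$.

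With such $r_n$ in hand, the remaining verifications are short. The decomposition $|F_n\Delta E|\leq |(E_n\Delta E)\cap B_{r_n}| + |E\setminus B_{r_n}|$ shows $F_n\to E$ globally, since the inclusion $r_n\leq R_n$ controls the first summand via the choice of $R_n$, and $r_n\to\infty$ together with $|E|<\infty$ controls the second. Every compact set $K$ is eventually contained in $B_{r_n}$, so $G_n\cap K=\emptyset$ for large $n$, giving $G_n\to\emptyset$ locally. Finally, \eqref{eq:limitmassper} follows from the $L^1$-convergence of $F_n$ to $E$ together with the lower semicontinuity of the perimeter under $L^1$-convergence.
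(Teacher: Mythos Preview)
Your argument is correct in spirit and in most details, but there is one step where you invoke a hypothesis the lemma does not assume. The lemma only requires each $E_n$ to have \emph{finite} perimeter, not \emph{uniformly bounded} perimeter, so your appeal to the isoperimetric inequality to obtain $|E_n|\le M$ is not justified as written. Fortunately your own construction already yields the needed control: since $|(E_n\Delta E)\cap B_{R_n}|\to 0$ and $|E|<\infty$, you get
\[
|E_n\cap(B_{R_n}\setminus B_{R_n/2})|
\le |(E_n\Delta E)\cap B_{R_n}| + |E\cap(B_{R_n}\setminus B_{R_n/2})|
\le o(1)+|E|,
\]
so the averaging over $[R_n/2,R_n]$ still gives $\mathcal H^{d-1}(E_n\cap\partial B_{r_n})\le 2(|E|+o(1))/R_n\to 0$. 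With this fix the proof goes through verbatim.

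Your route differs from the paper's in a pleasant way. The paper splits into two cases according to whether $E$ is bounded or unbounded: for bounded $E$ the radii $r_n$ are chosen in a \emph{fixed} annulus $[R,2R]$ around $E$, while for unbounded $E$ the paper first constructs $R_n\to\infty$ via the condition $|E_n\cap B_{R_n}|=|E|$ and then averages in $[(R+R_n)/2,R_n]$. You avoid this dichotomy altogether by always sending $r_n\to\infty$ along a diagonal sequence governed by the local convergence. This is cleaner and shorter; the price is that even for bounded $E$ your cut wanders off to infinity, whereas the paper's choice keeps the cut near $E$ in that case. For the applications in this paper the distinction is irrelevant.
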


\begin{proof}
Clearly, we have
$$
\per E_n \leq \per F_n + \per G_n \leq \per E_n +2 \sigma_{r_n}(\partial B_{r_n}
\cap E_n) \,,
$$
where $\sigma_r$ denotes surface measure on the sphere $\partial B_r$. Thus, for the
bound \eqref{eq:perdown} on the perimeter, we have to prove that
\begin{equation}\label{eq:dichoproof}
\sigma_{r_n}(\partial B_{r_n} \cap E_n)\to 0
\end{equation}
for a suitable choice of $r_n$.

In order to construct the $r_n$ we distinguish two cases. We first assume that $E$
is (essentially) bounded, say, $E\subset B_{R}$ for some $R>0$. Then $E_n\to E$
locally implies that $\epsilon_n := |E_n\cap (B_{2R}\setminus B_R)|\to 0$. We claim
that for every $n$ there is an $r_n\in [R,2R]$ with $\sigma_{r_n}(\partial
B_{r_n}\cap E_n) \leq \epsilon_n/R$, proving \eqref{eq:dichoproof}. In fact, if this
was not the case, we had, by integration in spherical coordinates,
$$
\epsilon_n < \int_R^{2R} \sigma_r(\partial B_r\cap E_n) \,dr = |E_n\cap
(B_{2R}\setminus B_R)| = \epsilon_n \,,
$$
which is a contradiction.

Now assume that $E$ is (essentially) unbounded and define $R_n$ by $|E_n\cap
B_{R_n}|= |E|$. (This is well defined if $|E_n|>|E|$, which we may assume after
discarding finitely many $E_n$'s.) We claim that $R_n\to\infty$. In fact, if we had
$R_{n_k}\to R_*$ for some $R_*<\infty$, then, using $E_n\to E$ locally,
$$
|E| = |E_{n_k}\cap B_{R_{n_k}}| = |E_{n_k}\cap B_{R_*}| + o(1) = |E\cap B_{R_*}| +
o(1)< |E| + o(1) \,,
$$
which is a contradiction. Now choose $R$ such that $|E\cap B_R|=|E|/2$. Then, since
$E_n\to E$ locally, for all sufficiently large $n$, $|E_n\cap B_R|\geq
|E|/4$. After discarding finitely many $n$ we may assume that $R_n> R$. We
claim that for every $n$ there is an $r_n\in [(R+R_n)/2,R_n]$ with
$\sigma_{r_n}(\partial B_{r_n}\cap E_n) \leq 3|E|/(2(R_n-R))$, proving \eqref{eq:dichoproof} once more. In fact, if this were not the
case, we would have, like before,
$$
\tfrac34 |E| < \int_{\tfrac{R+R_n}2}^{R_n} \sigma_r(\partial B_r\cap E_n) \,dr =
|E_n\cap (B_{R_n}\setminus B_{\tfrac{R+R_n}2})| \leq |E_n\cap B_{R_n}| - |E_n\cap
B_R| \leq \tfrac34 |E| \,,
$$
which is a contradiction. This completes the proof of \eqref{eq:perdown}.

It remains to prove \eqref{eq:limit} (which implies \eqref{eq:limitmassper})
and, to do so, we again distinguish two cases. Assume first
that $E$ is bounded and note that
\begin{equation}
\label{eq:dichoproofbdd}
|F_n\Delta E| = 2 |F_n\setminus E| + |E| - |F_n| \,.
\end{equation}
Since $F_n \supset E_n\cap B_R$, we have $|F_n|\geq |E_n\cap B_R|$ and, by local
convergence, $|E_n\cap B_R|\to |E\cap B_R|=|E|$. Thus, $\liminf |F_n|\geq |E|$. On
the other hand, we have $F_n\subset E_n\cap B_{2R}$ and therefore $|F_n\setminus E|
\leq |(E_n\setminus E)\cap B_{2R}|$. By local convergence, $|(E_n\setminus E)\cap
B_{2R}|\to 0$. Because of \eqref{eq:dichoproofbdd} we conclude that $|F_n\Delta
E|\to 0$. Finally, if $K$ is compact, then $|G_n\cap K| \leq |E_n\cap
(\R^d\setminus\overline{B_R})\cap K|$. By local convergence, the latter converges to
$|E\cap (\R^d\setminus\overline{B_R})\cap K|=0$, so we conclude that
$G_n\to\emptyset$ locally.

Now assume that $E$ is unbounded and note that
\begin{equation}
\label{eq:dichoproofunbdd}
|F_n\Delta E| = 2 |E\setminus F_n| + |F_n| - |E| \,.
\end{equation}
Since $F_n\subset E_n\cap B_{R_n}$, we have $|F_n|\leq |E_n\cap B_{R_n}|=|E|$. On
the other hand, we have $|E\setminus F_n| = |(E\setminus E_n)\cap B_{r_n}| +
|E\setminus B_{r_n}|$. By dominated convergence, since $r_n\to\infty$, we have
$|E\setminus B_{r_n}|\to 0$. Now given $\epsilon>0$ let $\rho>0$ such that $|E\cap
(\R^d\setminus \overline{B_\rho})| \leq \epsilon$. Then, if $r_n\geq\rho$,
$|(E\setminus E_n)\cap B_{r_n}| = |(E\setminus E_n)\cap B_\rho| + |(E\setminus
E_n)\cap B_{r_n}\cap (\R^d\setminus\overline{B_\rho})|$. We have $|(E\setminus
E_n)\cap B_\rho|\to 0$ by local convergence and $|(E\setminus E_n)\cap B_{r_n}\cap
(\R^d\setminus\overline{B_\rho})|\leq \epsilon$. All this proves that $|E\setminus
F_n| \to 0$. Because of \eqref{eq:dichoproofunbdd} we conclude that $|F_n\Delta
E|\to 0$. Finally, if $K$ is compact, then $K\cap (\R^d\setminus B_{r_n})=\emptyset$
for all sufficiently large $n$, so $|G_n\cap K|=0$ and we conclude that
$G_n\to\emptyset$ locally. This concludes the proof of the lemma.
\end{proof}

Clearly, the sets $F_n$ and $G_n$ from the previous proposition satisfy
$$
|F_n| + |G_n| = |E_n| \,.
$$
We next show that, at least asymptotically, some non-local functionals are also
additive with respect to this decomposition. For $0<\lambda<d$ let
$$
I_\lambda(E) = \frac12 \iint_{E\times E} \frac{dx\,dy}{|x-y|^\lambda} \,.
$$

\begin{lemma}\label{coulomb}
Let $F_n$ and $G_n$ be sequences of measurable sets in $\R^d$ with uniformly bounded
measure and $|F_n\cap G_n|=0$ for all $n$. Assume that
\begin{equation*}
F_n \to E
\ \text{(globally)}\qquad \text{and}\qquad
G_n \to \emptyset
\ \text{locally}
\end{equation*}
for some set $E$. Then for all $0<\lambda<d$,
\begin{equation}
\label{eq:coulomb1}
I_\lambda(F_n \cup G_n) = I_\lambda(F_n) + I_\lambda(G_n) + o(1) \,.
\end{equation}
and
\begin{equation}
\label{eq:coulomb2}
I_\lambda(F_n) = I_\lambda(E) + o(1) \,.
\end{equation}
\end{lemma}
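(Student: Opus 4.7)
The key tool underlying both claims is the elementary bound
\begin{equation*}
\iint_{A\times B} \frac{dx\,dy}{|x-y|^\lambda} \leq C_{d,\lambda}\,|A|^{1-\lambda/d}\,|B|,
\end{equation*}
valid for any measurable sets $A,B\subset\R^d$ of finite measure. This follows from the standard rearrangement/layer-cake fact that $\int_A |x-y|^{-\lambda}\,dx$ is maximized, for fixed $|A|$, when $A$ is a ball centered at $y$, giving $C\,|A|^{1-\lambda/d}$; integrating in $y\in B$ yields the displayed inequality. By symmetry the roles of $A$ and $B$ may be swapped. Since $\lambda<d$, the right-hand side vanishes whenever one of the two measures tends to zero while the other stays bounded.

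To prove \eqref{eq:coulomb1}, the disjointness $|F_n\cap G_n|=0$ reduces the claim to showing that the cross term $J_n := \iint_{F_n\times G_n} |x-y|^{-\lambda}\,dx\,dy$ tends to $0$. Fix $\epsilon>0$ and choose $R$ so large that $|E\setminus B_R|<\epsilon$ and $R^{-\lambda}\sup_n(|F_n|\,|G_n|)<\epsilon$. Split $J_n$ according to the location of $y\in F_n$ and $x\in G_n$ into three regions: (i) $y\in F_n\cap B_R$ and $x\in G_n\cap B_{2R}$, where the key estimate gives a bound $C\,|F_n|\,|G_n\cap B_{2R}|^{1-\lambda/d}$ that vanishes as $n\to\infty$ for fixed $R$ because $G_n\to\emptyset$ locally; (ii) $y\in F_n\setminus B_R$ and $x\in G_n$, bounded by $C\,|G_n|\,|F_n\setminus B_R|^{1-\lambda/d}$, where $|F_n\setminus B_R|\leq |F_n\Delta E|+|E\setminus B_R|<2\epsilon$ for large $n$ by global convergence $F_n\to E$; (iii) $y\in F_n\cap B_R$ and $x\in G_n\setminus B_{2R}$, on which $|x-y|\geq R$ and hence the integral is at most $R^{-\lambda}\,|F_n|\,|G_n|<\epsilon$. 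Taking $\limsup_n$ and then $\epsilon\to 0$ gives $J_n\to 0$.

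Claim \eqref{eq:coulomb2} follows from the same estimate applied to the pointwise identity $\chi_{F_n}(x)\chi_{F_n}(y)-\chi_E(x)\chi_E(y)=(\chi_{F_n}(x)-\chi_E(x))\chi_{F_n}(y)+\chi_E(x)(\chi_{F_n}(y)-\chi_E(y))$. Since $|\chi_{F_n}-\chi_E|=\chi_{F_n\Delta E}$, the key estimate yields $|I_\lambda(F_n)-I_\lambda(E)|\leq C\,|F_n\Delta E|^{1-\lambda/d}\,(|F_n|+|E|)$, which tends to $0$ because $F_n\to E$ globally and the measures are uniformly bounded. The real obstacle lies in \eqref{eq:coulomb1}: local convergence of $G_n$ only controls mass on compacta, while the kernel $|x-y|^{-\lambda}$ does not decay at infinity, so no single mechanism suffices. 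The three-way decomposition is designed so that each piece exploits a different source of smallness --- local vanishing of $G_n$, tail control of $E$ together with global convergence of $F_n$, and polynomial decay of the kernel.
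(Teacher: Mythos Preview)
Your argument is correct and close in spirit to the paper's, but the organization of the cross-term estimate differs enough to be worth noting. The paper first replaces $F_n$ by $E$ in the cross term via the same rearrangement bound you use, reducing to $I_\lambda(E,G_n)$; it then invokes a separate preliminary fact, namely that $\int_E |x-y|^{-\lambda}\,dy\to 0$ as $|x|\to\infty$ for any fixed set $E$ of finite measure, and splits $G_n$ into $G_n\cap B_R$ (handled by local convergence) and $G_n\setminus B_R$ (handled by this decay). Your three-way split of $F_n\times G_n$ bypasses the replacement step and the decay-at-infinity preliminary: region~(iii) substitutes the crude pointwise bound $|x-y|^{-\lambda}\le R^{-\lambda}$, and region~(ii) recovers the effect of replacing $F_n$ by $E$ via the tail estimate $|F_n\setminus B_R|\le |F_n\Delta E|+|E\setminus B_R|$. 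Both routes rest on the same two ingredients --- the rearrangement bound and a near/far decomposition exploiting local vanishing of $G_n$ together with global convergence of $F_n$ --- so the difference is cosmetic, though your version is slightly more self-contained. Your proof of \eqref{eq:coulomb2} is the same as what the paper indicates by ``similar''.
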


\begin{proof}
As a preliminary to the proof we note that for an arbitrary set $F\subset\R^d$ of
finite measure, by a simple rearrangement inequality \cite[Thm. 3.4]{LiLo},
\begin{equation}
\label{eq:coulombprelim1}
\int_F \frac{dy}{|x-y|^\lambda} \leq \int_{F^*} \frac{dy}{|y|^\lambda} = C
|F|^{(d-\lambda)/d}
\end{equation}
with some explicit $C$ depending only on $d$ and $\lambda$. Moreover,
\begin{equation}
\label{eq:coulombprelim2}
\int_F \frac{dy}{|x-y|^\lambda} \to 0
\qquad\text{as}\ |x|\to\infty \,.
\end{equation}
(This follows from \cite[Thm. 2.2]{LiLo} by decomposing $|x|^{-\lambda}$ in
a short and a long range part.)  

We now turn to the proof of \eqref{eq:coulomb1}. We need to show that
$$
2I_\lambda(F_n,G_n):= \iint_{F_n\times G_n} \frac{dx\,dy}{|x-y|^\lambda} \to 0 \,.
$$
Using \eqref{eq:coulombprelim1} and $F_n\to E$ we find
$$
\left| I_\lambda(F_n,G_n) - I_\lambda(E,G_n) \right| \leq I_\lambda(F_n\Delta E,G_n)
\leq C|F_n\Delta E|^{(d-\lambda)/d} |G_n| \to 0 \,,
$$
so it is enough to prove $I_\lambda(E,G_n)\to 0$. Let $\epsilon>0$ be given and use
\eqref{eq:coulombprelim2} to find $R>0$ such that $\int_E |x-y|^{-\lambda}\,dy \leq
\epsilon$ for $|x|\geq R$. We decompose and bound
$$
I_\lambda(E,G_n) = I_\lambda(E,G_n\cap B_R) + I_\lambda(E,G_n\setminus B_R) 
\leq C |E|^{(d-\lambda)/d} |G_n\cap B_R| + \epsilon |G_n\setminus B_R| \,.
$$
Since $G_n\to\emptyset$ locally, we have $|G_n\cap B_R|\to 0$ and therefore
$I_\lambda(E,G_n)\to 0$, as claimed. The proof of \eqref{eq:coulomb2} is similar.
\end{proof}


\section{Application to the liquid drop model}

\subsection{Results}

In this section we return to the liquid drop model discussed in the
introduction. Thus, we assume $d=3$, $\lambda=1$ and abbreviate
$I_1(\Omega)=D(\Omega)$. We recall that the binding energy of a set $\Omega$
and the minimal binding energy with nucleon number $A$ were defined as
$$
\mathcal E(\Omega) = \per\Omega + D(\Omega)
\qquad\text{and}\qquad
E(A) = \inf\{ \mathcal E(\Omega):\ |\Omega|=A \} \,.
$$
It is easy to see (and proved, for instance, in \cite{LuOt}) that
\begin{equation}
\label{eq:bindingweak}
E(A) \leq E(A')+ E(A-A')
\qquad\text{for all}\ 0<A'<A \,.
\end{equation}
Our first theorem states that if this inequality is strict, then there is a
minimizer for $E(A)$.

\begin{theorem}\label{binding}
Let $A>0$ such that
\begin{equation}
\label{eq:bindingineq}
E(A) < E(A') + E(A-A')
\qquad\text{for all}\ 0<A'<A \,.
\end{equation}
Then the infimum defining $E(A)$ is attained. Moreover, any minimizing sequence has
a subsequence which, after a translation, converges (globally) to a minimizer.
\end{theorem}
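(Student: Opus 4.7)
The plan is to combine Proposition \ref{runningafter}, Lemma \ref{dichotomy}, and Lemma \ref{coulomb} with the strict binding inequality \eqref{eq:bindingineq} to show that no mass escapes in the limit. Let $(\Omega_n)$ be a minimizing sequence with $|\Omega_n|=A$ and $\mathcal E(\Omega_n)\to E(A)$. Since $\per\Omega_n$ is uniformly bounded (by $E(A)+1$ eventually), Proposition \ref{runningafter} applies. Alternative (1) is ruled out by $|\Omega_n|=A>0$, so after extracting a subsequence and translating I may assume $\Omega_n\to E$ locally, where $0<|E|\leq A$ and $\per E\leq\liminf\per\Omega_n$. I then split into two cases, depending on whether $|E|=A$ or $|E|<A$.

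In the first case, local convergence plus the equality $|\Omega_n|=|E|=A$ forces global convergence: given $\epsilon>0$, choose $R$ with $|E\setminus B_R|<\epsilon$; then $|\Omega_n\setminus B_R|=A-|\Omega_n\cap B_R|\to |E\setminus B_R|<\epsilon$, which yields tightness and hence $|\Omega_n\Delta E|\to 0$. Applying Lemma \ref{coulomb} with $F_n=\Omega_n$ (equation \eqref{eq:coulomb2}) gives $D(\Omega_n)\to D(E)$, and lower semi-continuity of the perimeter yields $\per E\leq\liminf\per\Omega_n$. Hence $\mathcal E(E)\leq\liminf\mathcal E(\Omega_n)=E(A)$, so $E$ is a minimizer, and the minimizing sequence converges globally to it.

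In the second case, set $A_1:=|E|\in(0,A)$. Apply Lemma \ref{dichotomy} to produce $F_n=\Omega_n\cap B_{r_n}$ and $G_n=\Omega_n\setminus\overline{B_{r_n}}$ with $F_n\to E$ globally, $G_n\to\emptyset$ locally, $|F_n|\to A_1$, $|G_n|\to A-A_1$, and $\per\Omega_n=\per F_n+\per G_n+o(1)$. Lemma \ref{coulomb} gives $D(\Omega_n)=D(F_n)+D(G_n)+o(1)$, and consequently
$$
\mathcal E(\Omega_n)=\mathcal E(F_n)+\mathcal E(G_n)+o(1).
$$
To compare $\mathcal E(F_n)$ with $E(A_1)$, I would rescale by $\mu_n=(A_1/|F_n|)^{1/3}\to 1$ so that $|\mu_n F_n|=A_1$; since $\per F_n$ and $D(F_n)$ are uniformly bounded,
$$
E(A_1)\leq\mathcal E(\mu_n F_n)=\mu_n^2\per F_n+\mu_n^5 D(F_n)=\mathcal E(F_n)+o(1),
$$
and similarly $E(A-A_1)\leq\mathcal E(G_n)+o(1)$. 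Passing to the limit,
$$
E(A)=\lim\mathcal E(\Omega_n)\geq E(A_1)+E(A-A_1),
$$
contradicting \eqref{eq:bindingineq}. Thus this case cannot occur and we are reduced to the first.

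The main technical obstacle that the preliminary lemmas overcome is the \emph{asymptotic additivity of the perimeter} along the cut along $\partial B_{r_n}$, since the perimeter is a priori only subadditive with a boundary correction; Lemma \ref{dichotomy} handles this by an averaging argument over $r_n$. The companion additivity for the Coulomb energy supplied by Lemma \ref{coulomb} is what couples the perimeter and the non-local part so that the full energy splits, and then the strict binding inequality does the rest. The scaling trick to match the exact masses $A_1$ and $A-A_1$ is needed because the limits $|F_n|\to A_1$ and $|G_n|\to A-A_1$ are only asymptotic; it bypasses any need for a separate continuity statement about $E(A)$.
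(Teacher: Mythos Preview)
Your proof is correct and follows essentially the same route as the paper: apply Proposition~\ref{runningafter} to get a nontrivial local limit, then use Lemmas~\ref{dichotomy} and~\ref{coulomb} to split the energy and derive a contradiction to~\eqref{eq:bindingineq} if mass escapes. The only notable difference is that you rescale $F_n$ and $G_n$ to exact masses $A_1$ and $A-A_1$, whereas the paper instead invokes continuity of $A\mapsto E(A)$ (via the concavity of $A^{-2/3}E(A)$) to pass to the limit in $E(|G_n|)$; both devices serve the same purpose and are equally valid.
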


We recall that (global) convergence of sets was defined at the beginning of
Section~\ref{sec:exabstr}. It means that the measure of the symmetric difference
between the sets and their limit tends to zero.

Of physical interest is the quantity
$$
e(A) := \frac{E(A)}{A} \,,
$$
the binding energy per particle.

\begin{theorem}\label{iron}
There is an $A_*>0$ such that $e(A_*)=\inf\{ e(A):\ A>0 \}$. Moreover, the infimum
defining $E(A_*)$ is attained.
\end{theorem}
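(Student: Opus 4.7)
The plan is to identify a minimizing sequence for $e_* := \inf_{A > 0} e(A)$ whose masses remain in a bounded interval, and then apply the compactness machinery of Section~\ref{sec:exabstr} to extract a genuine minimizer. As a preliminary: since for $A \leq A_{c_1}$ the ball minimizes $E(A)$, one has $e(A) \geq \per(B_A)/A = (36\pi)^{1/3}A^{-1/3} \to \infty$ as $A \to 0^+$, while any ball gives a finite upper bound, so $0 < e_* < \infty$ and any minimizing sequence for $e$ lies in some $[A_0, \infty)$ with $A_0 > 0$. Weak subadditivity together with $E(h) \leq \mathcal E(B_h) \to 0$ as $h \to 0^+$ yields continuity of $E$ (hence of $e$) on $(0, \infty)$.

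The crucial preliminary is to show that $e_* = \inf_{(0, A_{c_2}]} e$. This is where the Lu--Otto analysis for fact~(b) enters: for $A > A_{c_2}$, non-existence of a minimizer for $E(A)$ combined with Proposition~\ref{runningafter} and Lemma~\ref{dichotomy} applied to any minimizing sequence for $E(A)$ forces a definite portion of the mass to escape to infinity. Concretely, \eqref{eq:perdown} together with Lemma~\ref{coulomb} delivers a non-trivial splitting $E(A) = E(A_1) + E(A_2)$ with $A_1, A_2 > 0$ and $A_1 + A_2 = A$. Iterating (and invoking Lu--Otto's quantitative control on the escaping pieces to keep the recursion finite), one obtains a partition $A = \sum_j A_j$ with each $A_j \leq A_{c_2}$ and $E(A) = \sum_j E(A_j)$; hence $e(A) = \sum_j A_j e(A_j)/A \geq \min_j e(A_j) \geq \inf_{(0, A_{c_2}]} e$, as claimed.

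Now choose a minimizing sequence $A_n \in [A_0, A_{c_2}]$ for $e_*$; by fact~(b) each $E(A_n)$ is attained by some $\Omega_n$ with $|\Omega_n| = A_n$ and $\mathcal E(\Omega_n) = E(A_n)$. Along a subsequence $A_n \to A_\infty \in [A_0, A_{c_2}]$, Proposition~\ref{runningafter} (alternative~(1) ruled out by $A_n \geq A_0$) yields translates $\Omega_n - a_n \to \Omega_*$ locally with $|\Omega_*| > 0$. If $|\Omega_*| = A_\infty$ the convergence is global, and lower semicontinuity of the perimeter plus Lemma~\ref{coulomb} give $\mathcal E(\Omega_*) \leq \liminf E(A_n) = e_* A_\infty$, so $A_* := A_\infty$ works. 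Otherwise $0 < |\Omega_*| < A_\infty$, and Lemma~\ref{dichotomy} decomposes $\Omega_n - a_n = F_n \sqcup G_n$ with $F_n \to \Omega_*$ globally, $G_n \to \emptyset$ locally, and \eqref{eq:perdown}; together with Lemma~\ref{coulomb} this gives $\mathcal E(\Omega_n) = \mathcal E(F_n) + \mathcal E(G_n) + o(1)$. Sandwiching $E(|F_n|) + E(|G_n|) \leq \mathcal E(F_n) + \mathcal E(G_n) = E(A_n) + o(1) \leq E(|F_n|) + E(|G_n|) + o(1)$ via weak subadditivity forces $\mathcal E(F_n) = E(|F_n|) + o(1)$ and $\mathcal E(G_n) = E(|G_n|) + o(1)$. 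The resulting identity $|F_n| e(|F_n|) + |G_n| e(|G_n|) = e_* A_n + o(1)$, with both $e$-factors $\geq e_*$ and positive limit masses, forces $e(|F_n|) \to e_*$. By continuity of $E$, $\lim E(|F_n|) = E(|\Omega_*|) = e_* |\Omega_*|$; lower semicontinuity of the perimeter with Lemma~\ref{coulomb} yields $\mathcal E(\Omega_*) \leq \liminf \mathcal E(F_n) = E(|\Omega_*|)$, so $\mathcal E(\Omega_*) = E(|\Omega_*|) = e_* |\Omega_*|$ and $A_* := |\Omega_*|$ closes this case.

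The main obstacle is the reduction step in the second paragraph: extracting from the Lu--Otto analysis the saturation of the weak subadditivity inequality for $A > A_{c_2}$. Once this reduction is in place, the compactness and dichotomy machinery of Section~\ref{sec:exabstr} carries out the extraction of a minimizer rather directly.
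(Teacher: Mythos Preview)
Your approach differs meaningfully from the paper's. The paper introduces the relaxed problem $e_\leq(A) = \inf\{\mathcal E(\Omega)/|\Omega|: 0 < |\Omega| \leq A\}$, shows (Lemma~\ref{relax}) that this infimum is \emph{always} attained, and then argues by contradiction: if the monotone function $e_\leq$ were strictly decreasing on all of $(0,\infty)$, its minimizers would furnish minimizers of $E$ at arbitrarily large mass, contradicting Lu--Otto. Your route instead first restricts the search to $(0, A_{c_2}]$ by iterated splitting of non-attained $E(A)$'s, and then runs compactness on a minimizing sequence confined to a compact mass interval. The relaxation idea buys a much cleaner endgame; your approach is closer in spirit to a direct concentration-compactness argument.

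There are, however, two genuine gaps. First, in your third paragraph you assert that ``by fact~(b) each $E(A_n)$ is attained.'' Fact~(b) says only that no minimizer exists for $A > A_{c_2}$; it says nothing about existence for $A \leq A_{c_2}$, and indeed this is open (it is tied to the conjecture $A_{c_1} = A_{c_2}$). This is easily repaired: take near-minimizers $\Omega_n$ with $\mathcal E(\Omega_n) \leq E(A_n) + 1/n$, and the rest of your paragraph goes through unchanged.

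Second, and more substantially, the iteration in your second paragraph needs a real justification for termination. Your appeal to ``Lu--Otto's quantitative control on the escaping pieces'' is unclear: Lu--Otto is a non-existence statement, not a quantitative bound on splitting. What does work is this. At each step, the local-limit piece $\Omega$ you extract satisfies $\mathcal E(\Omega) \leq \liminf \mathcal E(F_n) = E(|\Omega|)$ (this follows from your own sandwich together with \eqref{eq:limitmassper} and \eqref{eq:coulomb2}), so $\Omega$ is a minimizer for $E(|\Omega|)$ and hence $|\Omega| \leq A_{c_2}$ automatically. For termination, observe that at every step the remaining mass exceeds $A_{c_2}$ and its energy is dominated by $E(A)$; the quantitative estimate in the proof of Proposition~\ref{runningafter} then gives a uniform lower bound $\delta_0 > 0$ on the mass removed at each step, so the recursion halts after at most $A/\delta_0$ steps. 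With these repairs your argument is complete, though the paper's relaxation device reaches the conclusion more economically and without any iteration.
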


In other words, there is a set $\Omega_*\subset\R^3$ with finite perimeter such that
\begin{equation}
\label{eq:iron}
\frac{\per\Omega + D(\Omega)}{|\Omega|} \geq \frac{\per \Omega_* +
D(\Omega_*)}{|\Omega_*|}
\qquad\text{for all}\ \Omega\subset\R^3 \ \text{with finite perimeter}.
\end{equation}

Theorem \ref{iron} has the following simple corollary.

\begin{corollary}\label{tdlimit}
As $A\to\infty$, $e(A)\to e(A_*)$. Moreover, $e(kA_*)=e(A_*)$ for all $k\in\N$.
\end{corollary}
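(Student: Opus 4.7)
The plan is to derive both assertions directly from the subadditivity bound \eqref{eq:bindingweak} together with the fact, supplied by Theorem \ref{iron}, that $A_*$ realizes the infimum of $e$. The trivial lower bound is $e(A)\geq e(A_*)$ for every $A>0$, which already gives $\liminf_{A\to\infty} e(A)\geq e(A_*)$, so only the matching upper bound remains.

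For the identity $e(kA_*)=e(A_*)$ I would iterate \eqref{eq:bindingweak} with the choice $A'=A_*$: starting from $E(kA_*)\leq E(A_*)+E((k-1)A_*)$ and applying the same splitting to $E((k-1)A_*)$, and so on, one obtains $E(kA_*)\leq kE(A_*)$. Dividing by $kA_*$ gives $e(kA_*)\leq e(A_*)$, and combining with the trivial lower bound yields equality for every $k\in\N$.

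For the limit as $A\to\infty$, write each sufficiently large $A$ as $A=kA_*+r$ with $k=\lfloor A/A_*\rfloor\in\N$ and $r\in[0,A_*)$. Iterating \eqref{eq:bindingweak} as above (and noting that the final remainder drops out when $r=0$) gives $E(A)\leq kE(A_*)+E(r)$. Testing with a ball of volume $r$ shows $E(r)\leq c_1 r^{2/3}+c_2 r^{5/3}$ for explicit constants, hence $E(r)\leq M$ uniformly on $r\in[0,A_*]$. Dividing by $A$ then yields
\[
e(A)\ \leq\ \frac{kA_*}{A}\,e(A_*)+\frac{M}{A}\ \leq\ e(A_*)+\frac{M}{A}\,,
\]
so $\limsup_{A\to\infty}e(A)\leq e(A_*)$, completing the proof.

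There is no substantive obstacle here: every step reduces to the subadditivity \eqref{eq:bindingweak} and the elementary fact that $\mathcal E$ of a ball is bounded in terms of its volume on any compact interval. The only reason the conclusion is not immediate from \eqref{eq:bindingweak} alone is that one needs Theorem \ref{iron} to know that $A_*$ exists at all.
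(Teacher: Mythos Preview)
Your proof is correct and follows essentially the same approach as the paper's: both derive $e(kA_*)=e(A_*)$ by iterating subadditivity and combining with the minimality of $A_*$. The only cosmetic difference is that the paper invokes Fekete's lemma to assert existence of $\lim_{A\to\infty} e(A)$ and then identifies it via the subsequence $kA_*$, whereas you bypass Fekete by proving $\limsup e(A)\le e(A_*)$ directly with the remainder bound $E(r)\le M$ on $[0,A_*]$.
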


\begin{proof}[Proof of Corollary \ref{tdlimit}]
Subadditivity \eqref{eq:bindingweak} and non-negativity of $E(A)$ imply by abstract
principles that $\lim_{A\to\infty} e(A)$ exists. (This is sometimes called `Fekete's
lemma'.) Iterating \eqref{eq:bindingweak} we infer that $e(kA_*) \leq e(A_*)$ for
all $k\in\N$. On the other hand, since $A_*$ is the global minimium of $e(A)$, we
have $e(kA_*)\geq e(A_*)$. So $e(kA_*)=e(A_*)$ for all $k\in\N$ and, since we know
that $e(A)$ has a limit, this limit must be equal to $e(A_*)$, as claimed.
\end{proof}

Our final theorem is

\begin{theorem}\label{lead}
The set $\{ A>0 :\ E(A) \ \text{has a minimizer} \}$ is closed.
\end{theorem}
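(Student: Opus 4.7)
The plan is to take a sequence $(A_n)$ in the set with $A_n \to A$, pick minimizers $\Omega_n$ of $E(A_n)$, and combine Proposition~\ref{runningafter} with the quantitative bounds from \cite{LuOt} to extract a subsequential limit that minimizes $E(A)$.

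First I would verify that $E$ is continuous: subadditivity \eqref{eq:bindingweak} gives $|E(A) - E(A_n)| \le E(|A - A_n|)$, and a small ball of volume $\delta$ shows $E(\delta) \to 0$ as $\delta \to 0^+$. Hence $E(A_n) \to E(A)$ and $\per\Omega_n \le E(A_n)$ is uniformly bounded. Since $|\Omega_n| = A_n \to A > 0$ rules out alternative (1), Proposition~\ref{runningafter} yields translations $a_n$ and a subsequence such that $\Omega_n - a_n \to \Omega$ locally with $0 < |\Omega| \le A$.

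If $|\Omega| = A$, local convergence combined with $|\Omega_n - a_n| \to |\Omega|$ upgrades to global convergence, so Lemma~\ref{coulomb} (with $G_n$ empty) gives $D(\Omega_n - a_n) \to D(\Omega)$, while lower semi-continuity of the perimeter yields $\per\Omega \le \liminf \per\Omega_n$. Together these give $\mathcal E(\Omega) \le \lim \mathcal E(\Omega_n) = E(A)$, exhibiting $\Omega$ as a minimizer. The heart of the proof is to exclude $|\Omega| < A$. In that case, Lemma~\ref{dichotomy} applied to $\Omega_n - a_n$ produces $F_n \to \Omega$ globally and $G_n \to \emptyset$ locally with nearly additive perimeter, and Lemma~\ref{coulomb} gives $D(F_n) + D(G_n) = D(\Omega_n) + o(1)$ together with $D(F_n) \to D(\Omega)$. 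Pairing $\mathcal E(F_n) + \mathcal E(G_n) = E(A_n) + o(1) \to E(A)$ with the competitor bounds $\mathcal E(F_n) \ge E(|F_n|)$ and $\mathcal E(G_n) \ge E(|G_n|)$ and subadditivity forces the equalities $E(A) = E(|\Omega|) + E(A - |\Omega|)$ and $\mathcal E(\Omega) = E(|\Omega|)$, and exhibits $(G_n)$ as a near-minimizing sequence for $E(A-|\Omega|)$ escaping to infinity.

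To rule out this escape I would invoke the quantitative bounds used in the proof of (b) in \cite{LuOt}: even up to the threshold $A_{c_2}$, genuine minimizers of $E(B)$ cannot elongate indefinitely, yielding a uniform diameter bound on $\Omega_n$ for $A_n$ in a neighborhood of $A$. Since each $\Omega_n$ is a true minimizer of $E(A_n)$, this forces $\Omega_n - a_n \subset B_D$ for some fixed $D$, which is incompatible with $G_n$ carrying positive residual mass $A - |\Omega|$ outside a ball of radius $r_n$ centered at $a_n$. This contradiction completes the proof. The principal obstacle is precisely this Lu--Otto input: isolating from their non-existence argument the quantitative estimate needed to exclude elongated minimizers uniformly up to the critical mass $A_{c_2}$, which is the borderline case of the theorem and the most delicate.
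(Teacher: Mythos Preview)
Your proposal is correct and hinges on the same key input as the paper: the uniform diameter bound for minimizers coming from \cite{LuOt} (recorded in \cite[Lemma~7.2]{KnMu}). The paper's argument is slightly more streamlined: instead of first assuming $|\Omega|<A$, invoking Lemma~\ref{dichotomy} and Lemma~\ref{coulomb} to produce the escaping piece $G_n$, and only then bringing in the diameter bound to reach a contradiction, it applies the diameter bound at the outset to conclude that all the translated minimizers $\Omega_n - a_n$ lie in a fixed ball, so that local convergence is automatically global and $|\Omega|=A$ follows immediately---the whole dichotomy machinery is bypassed.
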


This proves, in particular, that for $A_{c_2}= \sup \{ A>0 :\ E(A) \ \text{has a
minimizer} \}$ there is a minimizer.


\subsection{Proof of Theorems \ref{binding}, \ref{iron} and \ref{lead}}

We now prove our three main theorems about the liquid drop model.

\begin{proof}[Proof of Theorem \ref{binding}]
Let $(\Omega_n)$ with $|\Omega_n|=A$ be a minimizing sequence for $E(A)$. Since
$\per\Omega_n \leq \mathcal E(\Omega_n)= E(A)+o(1)$ is uniformly bounded,
Proposition \ref{runningafter} yields a set $\Omega\subset\R^3$ with $0<|\Omega|\leq
A$ and $\per\Omega\leq \liminf \per\Omega_n$ such that, after passing to a
subsequence and a translation, we have $\Omega_n\to\Omega$ locally. Moreover, by
Fatou's lemma, we have $D(\Omega)\leq\liminf D(\Omega_n)$, so $\mathcal E(\Omega)
\leq \liminf \mathcal E(\Omega_n)$. Thus, $\Omega$ will be a minimizer provided we
can show that $|\Omega|=A$. Moreover, it is easy to see that $\Omega_n\to\Omega$
locally and $|\Omega_n|\to |A|$ implies that $\Omega_n\to\Omega$ globally. Therefore both statements of the theorem follow if we can prove that $|\Omega|=A$.

We argue by contradiction and assume that $|\Omega|<A$. Then according to
Lemma~\ref{dichotomy} we can write $\Omega_n= F_n \cup G_n$, $F_n\cap
G_n=\emptyset$, such that
$$
\per \Omega_n \geq \per \Omega + \per G_n + o(1) \,.
$$
Moreover, $F_n\to\Omega$ globally and $G_n\to\emptyset$ locally which, by Lemma
\ref{coulomb}, implies that
$$
D(\Omega_n) = D(\Omega) + D(G_n) + o(1) \,.
$$
Thus,
\begin{align*}
E(A) & = \mathcal E(\Omega_n)+ o(1) \\
& \geq \mathcal E(\Omega) + \mathcal E(G_n) + o(1) \\
& \geq E(|\Omega|) + E(|G_n|) + o(1) \,.
\end{align*}
Since $|G_n|=|\Omega_n|-|F_n|\to A-|\Omega|$ and since $A\mapsto E(A)$ is continuous
on $(0,\infty)$ (in fact, $A^{-2/3} E(A)= \inf\{ \per\omega + A D(\omega):\
|\omega|=1 \}$ is concave as an infimum over affine linear functions), we obtain in
the limit $n\to\infty$
$$
E(A) \geq E(|\Omega|) + E(A-|\Omega|) \,,
$$
which contradicts assumption \ref{eq:bindingineq}. Thus, $|\Omega|=A$.
\end{proof}

We prove Theorem \ref{iron} via the auxiliary minimization problem
$$
e_\leq(A) := \inf\left\{ \frac{\mathcal E(\Omega)}{|\Omega|}:\ 0<|\Omega|\leq A
\right\} \,.
$$
The idea of relaxing the equality constraint $|\Omega|=A$ to $|\Omega|\leq A$ is reminiscent of \cite{LiSi}.

\begin{lemma}\label{relax}
For any $A>0$, the infimum defining $e_\leq(A)$ is attained.
\end{lemma}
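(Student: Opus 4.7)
The plan is to combine the compactness package of Section~\ref{sec:exabstr} with a \emph{missing-mass} comparison that is made available precisely by the relaxation of the constraint from $|\Omega|=A$ to $|\Omega|\leq A$. First I would pick a minimizing sequence $(\Omega_n)$ with $0<|\Omega_n|\leq A$ and $\mathcal E(\Omega_n)/|\Omega_n|\to e_\leq(A)$. Testing against a ball of volume $A$ shows $e_\leq(A)<\infty$, so $\per\Omega_n$ is uniformly bounded. To exclude $|\Omega_n|\to 0$, I would invoke the three-dimensional isoperimetric inequality $\per\Omega_n\geq c\,|\Omega_n|^{2/3}$, which gives $\mathcal E(\Omega_n)/|\Omega_n|\geq c\,|\Omega_n|^{-1/3}$; boundedness of the ratios then forces $\inf_n|\Omega_n|>0$. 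This rules out alternative~(1) of Proposition~\ref{runningafter}, so after translating by suitable $a_n\in\R^3$ and passing to a subsequence (along which $|\Omega_n|$ converges) I obtain a measurable $\Omega$ with $0<|\Omega|\leq\lim|\Omega_n|\leq A$ and $\per\Omega\leq\liminf\per\Omega_n$ such that $\Omega_n-a_n\to\Omega$ locally. By translation-invariance of $\mathcal E$ I may replace $\Omega_n$ by $\Omega_n-a_n$ in what follows.

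Next I would apply Lemma~\ref{dichotomy} to split $\Omega_n=F_n\sqcup G_n$ with $F_n\to\Omega$ globally, $G_n\to\emptyset$ locally, and $\per\Omega_n\geq\per F_n+\per G_n+o(1)\geq\per\Omega+\per G_n+o(1)$ (the second inequality by lower semicontinuity of the perimeter applied to $F_n$). Lemma~\ref{coulomb} supplies $D(\Omega_n)=D(\Omega)+D(G_n)+o(1)$, so altogether $\mathcal E(\Omega_n)\geq\mathcal E(\Omega)+\mathcal E(G_n)+o(1)$. Because $0<|\Omega|\leq A$ and $0\leq|G_n|\leq|\Omega_n|\leq A$, both $\Omega$ and (when non-null) $G_n$ are admissible in the definition of $e_\leq(A)$, yielding $\mathcal E(\Omega)\geq e_\leq(A)\,|\Omega|$ and $\mathcal E(G_n)\geq e_\leq(A)\,|G_n|$. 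Setting $\delta:=\mathcal E(\Omega)-e_\leq(A)\,|\Omega|\geq 0$ and using $|F_n|+|G_n|=|\Omega_n|$ together with $|F_n|\to|\Omega|$, division by $|\Omega_n|$ produces
\begin{equation*}
e_\leq(A)=\lim\frac{\mathcal E(\Omega_n)}{|\Omega_n|}\geq e_\leq(A)+\frac{\delta}{\lim|\Omega_n|},
\end{equation*}
which forces $\delta=0$. Hence $\Omega$ attains $e_\leq(A)$.

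The main obstacle, and the structural point of the lemma, is this last missing-mass step: it requires testing the functional simultaneously against the weak limit $\Omega$ \emph{and} the remainder $G_n$, which is legal only because the constraint is an inequality $|\cdot|\leq A$ rather than the equality $|\cdot|=A$ appearing in the harder problem $E(A)$. If $|G_n|$ failed to stay $\leq A$, or if one were forced (as in Theorem~\ref{binding}) to match the volume constraint exactly, the same comparison would instead require information about the function $A\mapsto E(A)$ through a strict binding inequality; relaxing the constraint is precisely what makes the existence of a minimizer for $e_\leq(A)$ come for free.
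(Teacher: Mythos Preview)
Your proposal is correct and follows essentially the same route as the paper: rule out vanishing via the isoperimetric inequality, extract a nontrivial local limit $\Omega$ with Proposition~\ref{runningafter}, decompose via Lemmas~\ref{dichotomy} and~\ref{coulomb}, and then exploit that both $\Omega$ and $G_n$ are admissible for the relaxed problem to force $\mathcal E(\Omega)=e_\leq(A)\,|\Omega|$. One small technical point: Lemma~\ref{dichotomy} assumes the \emph{strict} inequality $|E|<\liminf|E_n|$, so you should treat the case $|\Omega|=\lim|\Omega_n|$ separately (as the paper does), noting that then local convergence together with convergence of the measures already gives global convergence and hence $\mathcal E(\Omega)\leq\liminf\mathcal E(\Omega_n)$ directly.
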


\begin{proof}
Let $(\Omega_n)$ with $|\Omega_n|\leq A$ be a minimizing sequence for $e_\leq(A)$. By the isoperimetric inequality, we have $e_\leq(A)+o(1) = \mathcal E(\Omega_n)/|\Omega_n| \geq C |\Omega_n|^{-1/3}$, which implies that $\liminf |\Omega_n|>0$. Moreover, $\per\Omega_n\leq \mathcal E(\Omega_n) \leq A(e_\leq(A)+o(1))$ is uniformly bounded, so as in the proof of Theorem \ref{binding} we obtain, after passing to subsequence and after a translation, a set $\Omega\subset\R^3$ with $0<|\Omega|\leq A$ and with $\Omega_n\to\Omega$ locally and $\mathcal E(\Omega) \leq\liminf\mathcal E(\Omega_n)$. We now distinguish two cases, according to how $|\Omega|$ compares to $\liminf |\Omega_n|$.

If $|\Omega|\geq\liminf|\Omega_n|$, then $\mathcal E(\Omega)/|\Omega| \leq \limsup \left( \mathcal E(\Omega_n)/|\Omega_n| \right) = e_\leq(A)$, so $\Omega$ is a minimizer for $e_\leq(A)$.

If $|\Omega|<\liminf |\Omega_n|$, then, as in the proof of Theorem \ref{binding},
\begin{align*}
e_\leq(A) & = \frac{\mathcal E(\Omega) + \mathcal E(G_n)}{|\Omega|+|G_n|} + o(1)
\geq \frac{\mathcal E(\Omega)}{|\Omega|+|G_n|} + \frac{|G_n|}{|\Omega|+|G_n|}\,
e_\leq(A) + o(1) \,.
\end{align*}
Rearranging the terms, we obtain $|\Omega|\, e_\leq (A) \geq \mathcal E(\Omega) + o(1)$, which again means that $\Omega$ is a minimizer for $e_\leq(A)$.
\end{proof}

\begin{proof}[Proof of Theorem \ref{iron}]
Clearly, $A\mapsto e_\leq(A)$ is non-increasing and non-negative, and it is continuous because of the continuity of $E\mapsto E(A)$ (see the proof of Theorem \ref{binding}). Let
$$
A_* := \sup\left\{ A>0:\ \text{there is an}\ A'>A \ \text{with}\ e_\leq(A')< e_\leq(A) \right\} \,.
$$
We claim that $A_*<\infty$. Clearly, this implies that $e_\leq(A)=e_\leq(A_*)$ for all $A\geq A_*$ and therefore that $e(A)\geq e(A_*)$ for all $A>0$. Moreover, the minimizer $\Omega_*$ for $e_\leq(A_*)$, which exists by Lemma \ref{relax}, has $|\Omega_*|=A_*$ and therefore is also a minimizer for $E(A_*)$.

We argue by contradiction and assume that $A_*=\infty$, that is, there is an increasing sequence $(A_n)$ with $A_n\to\infty$ such that $e_\leq(A_{n+1})<e_\leq(A_n)$ for all $n$. According to Lemma \ref{relax} there are sets $\Omega_n$ with $\mathcal E(\Omega_n) = e_\leq(A_n) |\Omega_n|$ and $|\Omega_n|\leq A_n$. The strict inequality $e_\leq(A_{n+1})<e_\leq(A_n)$ implies that $|\Omega_{n+1}|> A_n$ and so, in particular, $A_n':=|\Omega_n|\to\infty$. Since the sets $\Omega_n$ minimize $e_\leq(A_n')$, they also minimize $E(A_n')$, but the existence of minimizers for this problem with arbitrarily large $A_n'$ contradicts the result in \cite[Thm. 2]{LuOt}. This proves that $A_*<\infty$, as claimed.
\end{proof}

\begin{remark}
The minimization problem of Theorem \ref{iron} is equivalent to the following scale-invariant minimization problem,
$$
I = \inf\left\{ \frac{(\per\Omega)^{2/3} D(\Omega)^{1/3}}{|\Omega|} :\ \Omega\subset\R^3 \ \text{of finite perimeter} \right\}
$$
(We know from \cite[Lem. 7.1]{KnMu} that $I>0$.) In fact, in order to minimize $\mathcal E(\Omega)/|\Omega|$ we can minimize separately over shape and size of $\Omega=\ell\omega$, that is,
$$
\inf_\Omega \frac{\mathcal E(\Omega)}{|\Omega|} = \inf_{|\omega|=1} \inf_{\ell>0}
\left( \ell^{-1} \per\omega + \ell^2 D(\omega) \right) \,.
$$
For fixed $\omega$, the infimum is attained at
$\ell_\omega=(\per\omega/(2D(\omega))^{1/3}$ and we have
$$
\inf_\Omega \frac{\mathcal E(\Omega)}{|\Omega|} = 2^{-2/3} \cdot 3 \cdot
\inf_{|\omega|=1} (\per\omega)^{2/3} D(\omega)^{1/3} = 2^{-2/3} \cdot 3 I \,.
$$
In particular, if $\omega$ has $|\omega|=1$ and minimizes $(\per\Omega)^{2/3}
D(\omega)^{1/3}$, then $\Omega:=\ell_\omega \omega$ minimizes $\mathcal
E(\Omega)/|\Omega|$.
\end{remark}

\begin{proof}[Proof of Theorem \ref{lead}]
Let $A_n$ be a sequence with $A_n\to A\in (0,\infty)$ such that $E(A_n)$ has a
minimizer $\Omega_n$ with $|\Omega_n|=A_n$. As observed in the proof of Theorem
\ref{binding}, $A\mapsto E(A)$ is continuous, so $E(A_n)\to E(A)$ and $\per\Omega_n$
is uniformly bounded. As in the proof of Theorem \ref{binding} we may assume that
$\Omega_n\to\Omega$ locally for a set $\Omega\subset\R^3$ with $0<|\Omega|\leq A$
and $\mathcal E(\Omega)\leq\liminf \mathcal E(\Omega_n)$, and it remains to prove
that $|\Omega|=A$.

To prove this, we use a bound (implicitly) contained in \cite{LuOt}. Namely, for
every $\epsilon>0$ there is a constant $C_\epsilon>0$ such that if $A'\geq\epsilon$
and $\Omega'\subset\R^3$ is a minimizer for $E(A')$, then
\begin{equation}
\label{eq:leadproof2}
\diam\Omega'\leq C_\epsilon A \,.
\end{equation}
This is stated in \cite[Lemma 7.2]{KnMu} with $\epsilon=1$, but the same proof works
for $\epsilon<1$. (Technically speaking, in order to avoid ambiguities with sets of
measure zero, $\Omega'$ is replaced by the set $\{ x\in\R^3:\ \limsup_{r\to 0}
|\Omega'\cap B_r(x)|/|B_r(x)|>0 \}$. By Lebesgue's differentiation theorem, this set coincides with $\Omega'$ up to sets of measure zero, so none of the terms in the minimization problem changes under this replacement.)

Applying \eqref{eq:leadproof2} to $\Omega'=\Omega_n$ we infer that
$\limsup_{n\to\infty} \diam\Omega_n <\infty$. Thus, the sets $\Omega_n$ are
contained in a fixed ball, and then local convergence $\Omega_n\to\Omega$ implies
global convergence, so $|\Omega|=\lim |\Omega_n| = \lim A_n =A$, as claimed.
\end{proof}

\begin{remark}
Theorems \ref{binding} and \ref{lead} remain true, mutatis mutandis, for the functional $\mathcal E(\Omega) = \per\Omega + I_\alpha(\Omega)$, $\Omega\subset\R^d$, with $d\geq 2$ and $0<\alpha<d$. (For the analogue of the diameter bound \eqref{eq:leadproof2} see \cite[Lemma 7.2]{KnMu}.) Theorem \ref{iron} and Corollary \ref{tdlimit} remain true under the additional assumption $0<\alpha<2$. (Under this assumption the analogue of the Lu--Otto non-existence result remains valid, see \cite[Thm. 3.3]{KnMu}.)
\end{remark}


\section{Further remarks about the liquid drop model}

\subsection{A lower bound on $A_*$}

Let us consider the binding energy per particle for balls,
\begin{equation}
\label{eq:bindingball}
e^{(ball)}(A) := A^{-1/3} \,\frac{\per B}{|B|^{2/3}} + A^{2/3} \,\frac{D(B)}{|B|^{5/3}} \,,
\end{equation}
where $B$ is the unit ball in $\R^3$. (So $|B|=4\pi/3$, $\per B=4\pi$ and $D(B)$ can
be computed, but we don't need this.) Clearly, $e^{(ball)}(A)$ decreases up to some
$A_*^{(ball)}>0$ and then increases. Setting the $A$-derivative of $e^{(ball)}(A)$
equal to zero we find
\begin{equation}
\label{eq:astarball}
A_*^{(ball)} = \frac{|B|\,\per B}{2\, D(B)} \,.
\end{equation}
The main result of this section is the following quantitative bound on the curve
$e(A)$.

\begin{proposition}
Let
$$
A_0 := \sup\left\{ A_0>0:\ A\mapsto e(A)\ \text{is strictly decreasing on}\ (0,A_0)
\right\}
$$
Then $A_0 \geq A_*^{(ball)}$ with equality iff $E(A_0)$ is minimized by a ball.
\end{proposition}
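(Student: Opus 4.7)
The plan is to parametrize competitors by their normalized shape and compare against the ball via the isoperimetric inequality and Riesz rearrangement. For any measurable $\omega\subset\R^3$ with $|\omega|=1$, set
\[
\psi_\omega(V) := V^{-1/3}\per\omega + V^{2/3}D(\omega), \qquad V>0,
\]
so that $\psi_\omega(V) = \mathcal E(V^{1/3}\omega)/|V^{1/3}\omega|$ and hence $e(V) = \inf\{\psi_\omega(V) : |\omega|=1\}$. Each $\psi_\omega$ is strictly convex with unique minimum at $V_\omega^* := \per\omega/(2D(\omega))$. The isoperimetric inequality gives $\per\omega \geq p := \per B/|B|^{2/3}$, and Riesz rearrangement gives $D(\omega) \leq d := D(B)/|B|^{5/3}$, so $V_\omega^* \geq p/(2d) = A_*^{(ball)}$, with equality exactly when $\omega$ is a ball.

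The key step is the pointwise comparison: for any $V_1<V_2$,
\[
\psi_\omega(V_1)-\psi_\omega(V_2) = (V_1^{-1/3}-V_2^{-1/3})\per\omega - (V_2^{2/3}-V_1^{2/3})D(\omega) \geq (V_1^{-1/3}-V_2^{-1/3})p - (V_2^{2/3}-V_1^{2/3})d = e^{(ball)}(V_1)-e^{(ball)}(V_2).
\]
Taking the infimum over $\omega$ gives
\[
e(V_1) - e(V_2) \geq e^{(ball)}(V_1) - e^{(ball)}(V_2) \qquad\text{for every } 0<V_1<V_2.
\]
When $V_2\leq A_*^{(ball)}$ the right side is strictly positive, since $e^{(ball)}$ is strictly decreasing there, so $e$ is strictly decreasing on $(0,A_*^{(ball)}]$, which gives $A_0\geq A_*^{(ball)}$.

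For the equality characterization, I rearrange the inequality to see that $A\mapsto e(A)-e^{(ball)}(A)$ is non-increasing. Combined with $e\leq e^{(ball)}$, this implies that if a ball minimizes $E(A)$ at some $A$, then a ball minimizes $E(A')$ for every $A'\leq A$. For the direction ``ball minimizes $E(A_0)$ $\Rightarrow$ $A_0=A_*^{(ball)}$'': if $A_0>A_*^{(ball)}$, we would have $e=e^{(ball)}$ on $(0,A_0]$, but $e^{(ball)}$ is strictly increasing on $(A_*^{(ball)},A_0]$, contradicting the strict decrease of $e$ on $(0,A_0)$. For the converse ``$A_0=A_*^{(ball)}$ $\Rightarrow$ ball minimizes $E(A_*^{(ball)})$'', I argue contrapositively: a non-ball near-minimizer $\tilde\Omega$ at $A_*^{(ball)}$, with $\tilde\omega = \tilde\Omega/A_*^{(ball)\,1/3}$, satisfies $V_{\tilde\omega}^*>A_*^{(ball)}$ strictly, so the curve $\psi_{\tilde\omega}$ continues to decrease strictly in a right-neighborhood of $A_*^{(ball)}$, and $e(V)\leq \psi_{\tilde\omega}(V)<\psi_{\tilde\omega}(A_*^{(ball)})$ there; transferring this into strict decrease of $e$ itself in an open neighborhood past $A_*^{(ball)}$ then forces $A_0>A_*^{(ball)}$.

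The main obstacle is this last step of the converse: a single better-than-ball competitor only gives an upper bound on $e$, while one needs strict \emph{monotonicity} of $e=\inf_\omega\psi_\omega$ itself on an interval. If a minimizer at $A_*^{(ball)}$ exists -- which one can try to establish via Theorem~\ref{binding} (when the strict binding inequality holds at $A_*^{(ball)}$) or directly via the compactness of Proposition~\ref{runningafter} together with Lemma~\ref{dichotomy} (using also $A_0\leq A_*\leq A_{c_2}$) -- then $\tilde\Omega$ may be taken as this minimizer, so $\psi_{\tilde\omega}(A_*^{(ball)})=e(A_*^{(ball)})$, and the inequality $e(V)\leq \psi_{\tilde\omega}(V)<e(A_*^{(ball)})$ for $V$ slightly past $A_*^{(ball)}$, combined with the strict decrease on $(0,A_*^{(ball)}]$, yields $A_0>A_*^{(ball)}$. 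Without a minimizer one must work with a minimizing sequence, extract a translated local limit via Proposition~\ref{runningafter}, and play the same role as $\tilde\Omega$.
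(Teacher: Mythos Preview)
Your comparison inequality $e(V_1)-e(V_2)\ge e^{(ball)}(V_1)-e^{(ball)}(V_2)$ for $V_1<V_2$ is correct and gives a clean, direct proof that $A_0\ge A_*^{(ball)}$, genuinely different from the paper's route. The paper instead first produces a minimizer $\Omega_0$ at $A_0$ (via the strict binding inequality, which follows from $e(A')>e(A_0)$ for $A'<A_0$, together with Theorem~\ref{binding}), then derives the \emph{virial relation} $\per\Omega_0=2D(\Omega_0)$ from the fact that $\ell\mapsto\mathcal E(\ell\Omega_0)/|\ell\Omega_0|$ has its minimum at $\ell=1$, and finally combines this with the isoperimetric and Riesz inequalities. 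In your parametrization the virial relation is exactly the statement $V_{\omega_0}^*=A_0$, so the two arguments are close cousins; yours avoids the existence step for the bare inequality, which is a nice simplification. Your implication ``ball minimizes at $A_0$ $\Rightarrow$ $A_0=A_*^{(ball)}$'' via the monotonicity of $e-e^{(ball)}$ is also correct.

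The gap is in the converse, and your proposed fix does not close it. Even granting a non-ball minimizer $\tilde\Omega$ at $A_0=A_*^{(ball)}$, your argument only yields $e(V)\le\psi_{\tilde\omega}(V)<e(A_0)$ for $V\in(A_0,V_{\tilde\omega}^*]$; it does \emph{not} give $e(V_1)>e(V_2)$ when both $V_1<V_2$ lie in $(A_0,V_{\tilde\omega}^*]$, and that strict monotonicity on an interval to the right of $A_0$ is exactly what is needed to contradict the definition of $A_0$ as a supremum. A single curve $\psi_{\tilde\omega}$ only bounds $e$ from above there, and your key comparison with $e^{(ball)}$ is useless in this range because $e^{(ball)}$ is increasing past $A_*^{(ball)}$. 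The paper avoids this trap by never trying to extend the monotonicity interval: it uses that $A_0$ is a local minimum of $e$ to conclude $V_{\omega_0}^*=A_0$ exactly (not merely $\ge A_0$), and then $A_0=V_{\omega_0}^*>A_*^{(ball)}$ whenever $\omega_0$ is not a ball follows immediately from the strict rearrangement inequality. What you are missing is precisely this upper bound $V_{\omega_0}^*\le A_0$; establishing it is the content of the virial step.
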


Note that this gives, in particular, the lower bound $A_*\geq A_*^{(ball)}$ on the
number $A_*$ from Theorem \ref{iron}.

\begin{proof}
We know from \cite{KnMu} that $E(A)$ is minimized by balls for small $A$ and
therefore $e(A)$ is decreasing for small $A$, so $A_0>0$. From Theorem \ref{iron} we
know that $e(A)$ has a global minimum, so $A_0<\infty$. By assumption we have
$e(A)>e(A_0)$ for all $0<A<A_0$ and, therefore, 
$$
E(A') + E(A_0-A') > A' e(A_0) + (A_0-A') e(A_0) = E(A_0)
\qquad\text{for all}\ 0<A'<A_0 \,.
$$
According to Theorem \ref{binding} this implies that there is a minimizer $\Omega_0$
for $E(A_0)$.

The next step is to derive a \emph{virial relation} for $\Omega_0$. Since $e(A)$ has
a local minimum at $A_0$, the function
$$
\ell\mapsto \frac{\mathcal E(\ell\Omega_0)}{|\ell\Omega_0|} = \ell^{-1}\,
\frac{\per\Omega_0}{|\Omega_0|} + \ell^2\, \frac{D(\Omega_0)}{|\Omega_0|}
$$
has a local minimum at $\ell=1$. Setting the derivative at $\ell=1$ equal to zero we
conclude that
\begin{equation}
\label{eq:virial}
\per\Omega_0 = 2 D(\Omega_0) \,.
\end{equation}

We now use \eqref{eq:virial} to prove a lower bound on $A_0$. For the proof we will
use the inequalities
\begin{equation}
\label{eq:rearrange}
\per E \geq \per E^*
\qquad\text{and}\qquad
D(E^*) \geq D(E) \,, 
\end{equation}
where $E^*$ denotes a ball of the same measure as $E$. The first inequality is just
a rewriting of the isoperimetric inequality and the second one follows, for
instance, from the Riesz rearrangement inequality \cite[Thm. 3.7]{LiLo}. From these
inequalities and \eqref{eq:virial} we deduce that
$$
D(\Omega_0) = \frac12 \per\Omega_0 \geq \frac 12 \per\Omega_0^* = \frac12 \frac{\per
B}{D(B)^{2/5}} D(\Omega_0^*)^{2/5} \geq \frac12 \frac{\per B}{D(B)^{2/5}}
D(\Omega_0)^{2/5} \,.
$$
(The middle equality here just uses scaling.) Thus,
$$
D(\Omega_0)^{3/5} \geq \frac12 \frac{\per B}{D(B)^{2/5}} \,.
$$
On the other hand, again by \eqref{eq:rearrange},
$$
D(\Omega_0)^{3/5} \leq D(\Omega_0^*)^{3/5} = \frac{D(B)^{3/5}}{|B|} |\Omega_0^*| =
\frac{D(B)^{3/5}}{|B|} |\Omega_0| \,.
$$
(The middle equality again just uses scaling.) Combining the last two inequalities
and recalling \eqref{eq:astarball} we obtain
$$
|\Omega_0| \geq \frac{|B|\, \per B}{2\, D(B)} = A_*^{(ball)} \,,
$$
as claimed.

Finally assume that $\Omega_0$ is not a ball. Then the \emph{strict} rearrangement
inequality $D(\Omega_0^*)>D(\Omega_0)$ \cite{LiChoquard} implies, by the same argument as
before, the strict inequality $|\Omega_0| > A_*^{(ball)}$. This concludes the proof.
\end{proof}

\begin{remark}
The same proof shows that the virial relation \eqref{eq:virial} holds whenever $A_0$
is a local minimum of $e(A)$ and $E(A_0)$ is attained by some $\Omega_0$.
\end{remark}


\subsection{Dissociation into balls}

Let us consider
$$
\tilde e(A) := A^{-1} \inf\left\{ \mathcal E(\Omega):\ |\Omega|=A \,,\ \Omega\
\text{is a countable union of disjoint balls} \right\} \,.
$$
This is the same as the energy  $e(A)$ per particle except that we restrict the
allowed sets to be countable disjoint unions of balls. Clearly, it suffices to
consider the case when the individual balls are infinitely far apart and therefore
\begin{align*}
\tilde e(A) = \inf \left\{ \sum_{k=1}^\infty \frac{A_k}{A}\  e^{(ball)}(A_k) :\
\sum_{k=1}^\infty A_k = A \right\}
\end{align*}
with the energy $e^{(ball)}(A_k)$ from \eqref{eq:bindingball}. It turns out that
this minimization problem can be almost explicitly solved.

\begin{proposition}\label{dissoc}
For any $A>0$,
$$
\tilde e(A) = \min_{k\in\N} e^{(ball)}(A/k) \,.
$$
\end{proposition}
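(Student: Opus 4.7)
The plan is to reduce the problem to a scalar minimization over partitions of $A$, and then to an integer minimization via a subadditivity lemma. Because the Coulomb cross-energies between disjoint balls are non-negative, letting the balls in any near-minimizing configuration drift infinitely far apart shows that
\[
A\,\tilde e(A) \;=\; \inf\Bigl\{\sum_k f(A_k):\ A_k\geq 0,\ \sum_k A_k = A\Bigr\},
\]
where $f(A):=A\,e^{(ball)}(A)= c_1 A^{2/3}+c_2 A^{5/3}$ with $c_1=\per B/|B|^{2/3}$ and $c_2=D(B)/|B|^{5/3}$. The upper bound $\tilde e(A)\leq \min_k e^{(ball)}(A/k)$ is immediate by taking $k$ equal balls, so the content is the matching lower bound.

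Introduce $G(A):=\min_{n\in\N} n\,f(A/n)$; the proposition is equivalent to $A\,\tilde e(A)=G(A)$. I propose to derive the lower bound from the key lemma that $G$ is subadditive, i.e.\ $G(A+B)\leq G(A)+G(B)$ for all $A,B>0$. Iterating subadditivity, using the continuity of $G$ (as the lower envelope of the continuous functions $A\mapsto c_1 n^{1/3}A^{2/3}+c_2 n^{-2/3}A^{5/3}$) to pass from finite to countable partitions, and combining with the trivial bound $G(A_k)\leq f(A_k)$ (take $n=1$), one obtains
\[
G(A)=G\bigl(\textstyle\sum_k A_k\bigr)\leq \sum_k G(A_k)\leq \sum_k f(A_k),
\]
which is exactly the desired lower bound.

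For the subadditivity of $G$, let $m,n\in\N$ be optimal for $A,B$, set $u:=A/m$ and $v:=B/n$, and use $k:=m+n$ as a competitor for $A+B$; the corresponding equal ball mass is $w:=(mu+nv)/(m+n)$, so that the required inequality $(m+n)f(w)\leq m f(u)+n f(v)=G(A)+G(B)$ is just Jensen's inequality for $f$ on $[u\wedge v,u\vee v]$. A direct computation yields $f''(A)=-\tfrac{2c_1}{9}A^{-4/3}+\tfrac{10c_2}{9}A^{-1/3}$, so $f$ is convex on $(c_1/(5c_2),\infty)$, and the optimality of $m,n$ localizes $u,v$ near $A_*^{(ball)}=c_1/(2c_2)$, which lies well inside the convex region. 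The hard part will be the boundary regime in which $A$ (or $B$) is so small that its optimizer is $m=1$ and $u=A$ falls in the concave part of $f$; there I plan to use instead the competitor $k=1$ for $A+B$ together with the subadditivity of $f$ itself at small arguments, where the concave piece $c_1 A^{2/3}$ dominates the convex piece $c_2 A^{5/3}$, giving $G(A+B)\leq f(A+B)\leq f(A)+f(B)=G(A)+G(B)$ directly. A short case analysis based on the positions of $A$ and $B$ relative to $A_*^{(ball)}$ interpolates between the Jensen regime for large arguments and the subadditivity-of-$f$ regime for small ones.
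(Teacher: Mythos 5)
Your strategy — reduce to the scalar problem, set $G(A):=\min_{n\in\N}n\,f(A/n)$, and deduce the lower bound from a claimed \emph{subadditivity} of $G$ — is a genuinely different route from the paper's. The paper instead reduces to finite partitions, takes a minimizing $K$-tuple (which exists by compactness), and shows via a pairwise replacement argument (their Lemma~\ref{dissoclemma}, about the one-variable function $(\theta^{2/3}+(1-\theta)^{2/3})+\lambda(\theta^{5/3}+(1-\theta)^{5/3})$) that all nonzero entries of a minimizer must be equal; an induction on $K$ then closes the argument. Your deduction ``$G$ subadditive $\implies G(A)\leq\sum_k G(A_k)\leq\sum_k f(A_k)$'' is sound, and continuity of $G$ is indeed enough to pass from finite to countable partitions; the upper bound $\tilde e(A)\leq\min_k e^{(ball)}(A/k)$ is of course trivial. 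So if subadditivity of $G$ were established, your proof would be complete.

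However, the subadditivity of $G$ is precisely where your argument has a genuine gap, and it is not a short case analysis. Two specific problems. First, the assertion that ``the optimality of $m,n$ localizes $u,v$ near $A_*^{(ball)}$'' is false when the optimal integer is $1$: if $m=1$ then $u=A$ ranges over the entire interval $(0,a_1]$ (in the normalization $c_1=c_2=1$, $a_1\approx 0.70$), which includes the full concave region $(0,\,c_1/(5c_2))=(0,0.2)$, so you cannot invoke convexity of $f$ on $[u\wedge v,u\vee v]$. Second, the proposed fallback — take $k=1$ and use ``subadditivity of $f$ at small arguments'' — does not cover the regime it needs to cover. One has $f(A+B)\leq f(A)+f(B)$ iff $A+B\leq(c_1/c_2)\,\phi(A/(A+B))$ with $\phi(\theta)=\frac{\theta^{2/3}+(1-\theta)^{2/3}-1}{1-\theta^{5/3}-(1-\theta)^{5/3}}$, whose minimum (over $\theta$) is $\phi(1/2)=a_1c_2/c_1$. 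But in the ``boundary regime'' $B$ can be as large as $a_1$ while $A$ is small, so $A+B$ can exceed this threshold and $f$-subadditivity fails. Concretely (with $c_1=c_2=1$): take $A=0.19$ (just below the inflection point $0.2$, so $m=1$ is optimal and $u$ lies in the concave region) and $B=0.69$ (so $n=1$ is optimal). Then $f(A+B)=f(0.88)\approx 1.726>f(0.19)+f(0.69)\approx 1.714$, so the $k=1$ competitor with $f$-subadditivity fails; and the $k=m+n=2$ competitor with Jensen is not justified either, since $f$ is not convex on $[0.19,0.69]$. The inequality $2f(0.44)\approx 1.668\leq 1.714$ does hold numerically, but neither of your two arguments yields it, and the case analysis interpolating between them is exactly the content you would still have to supply. (There are further unaddressed cases, e.g.\ $m=1$ with $A$ in the concave region while $n\geq 2$.) Until the subadditivity of $G$ is proved in full, the proof is incomplete.
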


That is, $\tilde e(A)$ coincides with the energy per particle of $K$ infinitely
separated balls of volume $A/K$, where $K$ is to be optimized over. The energy per
particle of such a ball satisfies $e^{(ball)}(A/K)\sim (|\partial B|
/(A^{1/3}|B|^{2/3})) K^{1/3}$ as $K\to\infty$ and therefore the infimum over $K$ is
indeed attained at some finite $K$. With a little more work one can compute explicit
numbers $0=a_0<a_1<a_2<a_3<...$ with $a_k\to\infty$ such that $\tilde e(A)=
e^{(ball)}(A/k)$ for $a_{k-1}\leq A\leq a_k$. For instance,
$$
a_1 = \frac{2-2^{2/3}}{2^{2/3}-1}\ \frac{|B|\,\per B}{D(B)} \,.
$$
The proof of Proposition \ref{dissoc} is based on the following technical lemma.

\begin{lemma}\label{dissoclemma}
Let $\lambda>0$ and $f(\theta) = (\theta^{2/3} + (1-\theta)^{2/3}) + \lambda
(\theta^{5/3} + (1-\theta)^{5/3})$. Then for all $\theta\in [0,1]$
$$
f(\theta) \geq \min\{ f(0),f(1/2),f(1) \}
$$
with strict inequality unless $\theta\in\{0,1/2,1\}$.
\end{lemma}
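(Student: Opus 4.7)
The plan is to exploit the symmetry $f(\theta) = f(1-\theta)$, which reduces the claim to showing that on the half-interval $[0, 1/2]$ the function $f$ attains its minimum only at the endpoints, with strict inequality elsewhere. Since $f(0) = 1 + \lambda$ and $f(1/2) = 2^{1/3} + \lambda \cdot 2^{-2/3}$, the full statement then follows from symmetry.

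To analyze the shape of $f$ on $(0, 1/2)$, I would first write
\[
f'(\theta) = \frac{2}{3}\bigl(\theta^{-1/3} - (1-\theta)^{-1/3}\bigr) + \frac{5\lambda}{3}\bigl(\theta^{2/3} - (1-\theta)^{2/3}\bigr),
\]
observe that $f'(0^+) = +\infty$ and $f'(1/2) = 0$, and compute $f''(1/2) = \frac{4 \cdot 2^{1/3}}{9}(5\lambda - 2)$. Dividing the equation $f'(\theta) = 0$ by $(1-\theta)^{2/3} - \theta^{2/3}$, which is strictly positive on $(0, 1/2)$, interior critical points correspond to solutions of $H(\theta) = 5\lambda/2$, where
\[
H(\theta) := \frac{\theta^{-1/3} - (1-\theta)^{-1/3}}{(1-\theta)^{2/3} - \theta^{2/3}}.
\]
The key technical step is to show that $H$ is strictly monotone on $(0, 1/2)$. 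I would make the substitution $v := (\theta/(1-\theta))^{1/3} \in (0,1)$, under which, using the factorization $1 + v^3 = (1+v)(1-v+v^2)$, the function $H$ collapses to the remarkably simple form $H = v + v^{-1} - 1$. This is manifestly strictly decreasing on $(0,1)$, and since $v$ is strictly increasing in $\theta$, $H$ is strictly decreasing in $\theta$, running from $+\infty$ at $\theta = 0^+$ down to $1$ as $\theta \to (1/2)^-$. Hence the equation $H(\theta) = 5\lambda/2$ has at most one interior solution $\theta_0 \in (0, 1/2)$, and such a $\theta_0$ exists precisely when $\lambda > 2/5$. Finding and verifying this clean form of $H$ is the main obstacle; once it is in hand the remainder is routine.

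The proof concludes by cases. If $\lambda \leq 2/5$, then $f'$ has no zero in $(0, 1/2)$, so continuity together with $f'(0^+) = +\infty$ forces $f' > 0$ throughout $(0, 1/2)$; thus $f$ is strictly increasing on $[0, 1/2]$, and the minimum is attained uniquely at $\theta = 0$. If $\lambda > 2/5$, then $f''(1/2) > 0$, so $\theta = 1/2$ is a strict local minimum, while $f'(0^+) > 0$; since $\theta_0$ is the only interior zero of $f'$, it must be a local maximum, with $f' > 0$ on $(0, \theta_0)$ and $f' < 0$ on $(\theta_0, 1/2)$. In either case the minimum of $f$ over $[0, 1/2]$ is attained only at $\theta \in \{0, 1/2\}$, so $f(\theta) > \min\{f(0), f(1/2)\}$ for all $\theta \in (0, 1/2)$, and combining with the symmetry $f(\theta) = f(1-\theta)$ yields the claim.
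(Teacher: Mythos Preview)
Your proof is correct and follows the same overall strategy as the paper: factor $f'$ so that interior critical points on $(0,1/2)$ are governed by an auxiliary one-variable function, establish the monotonicity of that function, count the critical points, and split into the cases $\lambda\le 2/5$ and $\lambda>2/5$. The paper works with $g(\theta)=1/H(\theta)$ and asserts, without details, that $g$ is strictly concave ``by straightforward differentiation''; your substitution $v=(\theta/(1-\theta))^{1/3}$ yielding the closed form $H=v+v^{-1}-1$ is a cleaner way to handle this step, since monotonicity is then immediate and no second-derivative computation of the auxiliary function is needed. The tradeoff is that the paper's concavity statement is formulated on all of $(0,1)$ at once, whereas you use symmetry at the outset to reduce to $[0,1/2]$; either way the case analysis and conclusion are the same.
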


\begin{proof}[Proof of Lemma \ref{dissoclemma}]
Differentiating we find
\begin{equation}
\label{eq:dissoclemmaproof}
f'(\theta) = \frac{5\lambda}{3} \left( \theta^{-1/3} - (1-\theta)^{-1/3} \right)
\left( \frac{2}{5\lambda} - g(\theta) \right) \,,
\end{equation}
where
$$
g(\theta) := \frac{(1-\theta)^{2/3}-\theta^{2/3}}{\theta^{-1/3}- (1-\theta)^{-1/3}} \,.
$$
Clearly, $g$ is symmetric with respect to $\theta=1/2$ and satisfies $g(0)=g(1)=0$.
By straightforward differentiation we find that $g$ is strictly concave, so it
attains its maximum at $\theta=1/2$ and we compute $g(1/2)=1$. Thus for every
$0\leq\gamma<1$ there is a unique $\theta(\gamma)\in[0,1/2)$ such that
$g(\theta)=\gamma$ iff $\theta=\pm\theta(\gamma)$.

Returning with this information to \eqref{eq:dissoclemmaproof} we find that, if
$2/(5\lambda)\geq 1$, then $f'$ has a single zero at $\theta=1/2$ and, if
$2/(5\gamma)<1$, then $f'$ has three zeros $\theta=1/2,\pm \theta(2/(5\lambda))$. Since $f$ is clearly increasing near $0$ and decreasing near $1$, the single zero of $f'$ at $\theta=1/2$ for $2/(5\lambda)\geq 1$ and the two zeros at
$\theta=\pm\theta(2/(5\lambda))$ for $2/(5\lambda)<1$ must correspond to local maxima
of $f$. Thus, $f$ attains its minimum at $\theta\in\{0,1\}$ for $2/(5\lambda)\geq
1$. For $2/(5\lambda)<1$, $f$ has a local minimum at $\theta=1/2$ and it attains its
minimum on a subset of $\{0,1/2,1\}$. This proves the lemma.
\end{proof}

\begin{proof}[Proof of Proposition \ref{dissoc}]
We only need to prove the inequality $\geq$. We first note that in the definition of
$\tilde e(A)$ we can restrict ourselves to finite sums. Indeed, for any $K\in\N$ put
$\tilde A_K= \sum_{k=K}^\infty A_k$, so that
$$
\sum_{k=1}^\infty \frac{A_k}{A}\  e^{(ball)}(A_k)
\geq \left( \sum_{k=1}^{K-1} \frac{A_k}{A}\  e^{(ball)}(A_k) + \frac{\tilde A_K}{A}
e^{(ball)}(\tilde A_K) \right) - \frac{\tilde A_K}{A} e^{(ball)}(\tilde A_K) \,.
$$
Since $a\, e^{(ball)}(a)\to 0$ as $a\to 0$, the last term goes to zero as
$K\to\infty$. The first term can be bounded from below by the claimed expression if we prove the result for finite sums, as we will do now.

In fact, we will prove that
\begin{equation}
\label{eq:dissocproof}
\inf \left\{ \sum_{k=1}^K \frac{A_k}{A}\  e^{(ball)}(A_k) :\ \sum_{k=1}^K A_k = A
\right\} = \min_{K\geq k\in\N} e^{(ball)}(A/k)
\end{equation}
by induction over $K\in\N$. For $K=1$ \eqref{eq:dissocproof} is clearly true. Now assume that
$K\geq 2$ and that the assertion is proved for $1,\ldots,K-1$. The set
$\{(A_1,\ldots,A_K)\in [0,A]^K:\ \sum A_k = A\}$ is compact and the function
to be minimized on the left side of \eqref{eq:dissocproof} is continuous
(note again that $a\, e^{(ball)}(a)$ is continuous at $a=0$), so there is a
minimizer $A^{(0)}$. If $A^{(0)}_k=0$ for some $k$, then
\eqref{eq:dissocproof} follows by induction assumption. Therefore we may
assume that $A^{(0)}_k\neq 0$ for all $k$. Let $1\leq i<j\leq K$ be
arbitrary. We shall show that $A^{(0)}_i = A^{(0)}_j$, which will prove
\eqref{eq:dissocproof}. The contribution from these two numbers to the left
side in \eqref{eq:dissocproof} is
\begin{align*}
\frac{A_i^{(0)}}{A} e^{(ball)}(A^{(0)}_i) + \frac{A_j^{(0)}}{A}
e^{(ball)}(A^{(0)}_j) = & \left( \theta^{2/3} + (1-\theta)^{2/3} \right) \left(
A^{(0)}_i + A^{(0)}_j \right)^{2/3} \frac{P(B)}{A\, |B|^{2/3}} \\
& + \left( \theta^{5/3} + (1-\theta)^{5/3} \right) \left( A^{(0)}_i + A^{(0)}_j
\right)^{5/3} \frac{D(B)}{A\, |B|^{5/3}}
\end{align*}
with $\theta = A^{(0)}_i/(A^{(0)}_i + A^{(0)}_j)$. By construction we have
$\theta\neq 0,1$. If we had $\theta\neq 1/2$, then by Lemma \ref{dissoclemma} this
contribution would be strictly less than for $\theta=1/2$. Thus, replacing
$A^{(0)}_i$ and $A^{(0)}_j$ both by $(A^{(0)}_i+A^{(0)}_j)/2$ we would get a
strictly smaller energy while preserving the constraint, but this contradicts the
minimality of $A^{(0)}$. Thus, we have $\theta=1/2$, that is, $A^{(0)}_i=A^{(0)}_j$,
which is what we wanted to prove.
\end{proof}


\bibliographystyle{amsalpha}

\end{document}